\documentclass[a4paper,UKenglish,cleveref, numberwithinsect, thm-restate]{lipics-v2021}

\usepackage{prooftree}
\usepackage{mathpartir}
\usepackage{xcolor}
\usepackage{listings}
\usepackage{tikz-cd}
\usepackage{stmaryrd}
\usepackage{amsmath}
\usepackage{todonotes}
\usepackage{footnote}
\usepackage{url}
\usepackage{wasysym}
\usepackage{hyperref}
\usepackage{mpst_macros}

\definecolor{colorsec}{HTML}{345A8A}
\definecolor{colorsubsec}{HTML}{4F81BD}
\definecolor{colorsubsubsec}{HTML}{5388C8}

\hypersetup{
  colorlinks = true,
  citecolor=purple,
  linkcolor=colorsubsec,
}

\definecolor{bg}{rgb}{0.95,0.95,0.95}

\makeatletter

\usepackage[framemethod=tikz]{mdframed}
\usepackage[most]{tcolorbox}

\def\beginlstdelim#1#2#3#4%
{%
  \def\endlstdelim{#2\egroup}%
  {\ttfamily#3#1}\bgroup#4\aftergroup\endlstdelim%
}

\lstdefinelanguage{smt}{
  language=lisp,
  alsoletter=0123456789>=,
  keywords={define-fun,declare-fun,declare-const,set-option,echo,set-info,exit,pop,push,assert},
  classoffset=1,
  morekeywords={Int,Real,Bool},keywordstyle=\color{blue!60!black}\bfseries,
  classoffset=2,
  morekeywords={not,and,or,=>},keywordstyle=\bfseries,
  classoffset=3,
  morekeywords={check-sat,check-sat-assume},keywordstyle=\color{red}\bfseries,
  classoffset=4,
  morekeywords={true,false,0,1,2,3,4,5,6,7,8,9,10},keywordstyle=\color{violet}\bfseries,
  classoffset=0,
  moredelim=**[is][\beginlstdelim{define-fun\ }{\ }{\color{green!50!black}\bfseries}{\color{blue!80!black!50!white}\bfseries}]{define-fun\ }{\ },
  moredelim=**[is][\beginlstdelim{declare-const\ }{\ }{\color{green!50!black}\bfseries}{\color{blue!80!black!50!white}\bfseries}]{declare-const\ }{\ },
  moredelim=**[is][\beginlstdelim{declare-fun\ }{\ }{\color{green!50!black}\bfseries}{\color{blue!80!black!50!white}\bfseries}]{declare-fun\ }{\ },
  mathescape=false
}

\lstdefinelanguage{Rocq}{
  alsoletter={<}{:}0123456789,
  morekeywords={Variable,Variant,Inductive,CoInductive,Fixpoint,CoFixpoint,%
    Definition,Program, Lemma,Theorem,Corollary,Axiom,Local,Save,Grammar,Syntax,Intro,%
    Trivial,Qed,Intros,Symmetry,Simpl,Rewrite,Apply,Elim,Assumption,%
    Left,Cut,Case,Auto,Unfold,Exact,Right,Hypothesis,Pattern,Destruct,%
    Constructor,Defined,Fix,Record,Proof,Induction,Hints,Exists,%
    Parameter,Split,Red,Reflexivity,Transitivity,if,then,else,Opaque,Module,%
    Transparent,Inversion,Absurd,Generalize,Mutual,Cases,of,Analyze,%
    AutoRewrite,Functional,Scheme,params,Refine,using,Discriminate,Try,%
    Require,Load,Import,Scope,Open,Section,End,Ltac,fun,forall,exists,Canonical,Structure,Eval,Notation,as,return,Goal,Class,Module%
  },%
  classoffset=1,
  morekeywords={Type,Prop,bool,nat,Set,let,in,match,with,end,as,<:,Z,farray,bitvector},keywordstyle=\color{blue!60!black}\bfseries,
  classoffset=2,
  morekeywords={Error:,Warning:},keywordstyle=\color{red}\bfseries,
  classoffset=3,
  morekeywords={0,1,2,3,4,5,6,7,8,9,10,11,12,13,14,15,16,16,18,19,20},keywordstyle=\color{violet},
  classoffset=0,
  sensitive, %
  moredelim=**[is][\beginlstdelim{CoInductive\ }{\ }{\color{green!50!black}\bfseries}{\color{blue!80!black!50!white}\bfseries}]{CoInductive\ }{\ },
  moredelim=**[is][\beginlstdelim{Inductive\ }{\ }{\color{green!50!black}\bfseries}{\color{blue!80!black!50!white}\bfseries}]{Inductive\ }{\ },
  moredelim=**[is][\beginlstdelim{Definition\ }{\ }{\color{green!50!black}\bfseries}{\color{blue!80!black!50!white}\bfseries}]{Definition\ }{\ },
  moredelim=**[is][\beginlstdelim{Lemma\ }{\ }{\color{green!50!black}\bfseries}{\color{blue!80!black!50!white}\bfseries}]{Lemma\ }{\ },
  moredelim=**[is][\beginlstdelim{Axiom\ }{\ }{\color{green!50!black}\bfseries}{\color{blue!80!black!50!white}\bfseries}]{Axiom\ }{\ },
  moredelim=**[is][\beginlstdelim{Theorem\ }{\ }{\color{green!50!black}\bfseries}{\color{blue!80!black!50!white}\bfseries}]{Theorem\ }{\ },
  moredelim=**[is][\beginlstdelim{Class\ }{\ }{\color{green!50!black}\bfseries}{\color{blue!60!black}\bfseries}]{Class\ }{\ },
  moredelim=**[is][\beginlstdelim{Module\ }{\ }{\color{green!50!black}\bfseries}{\color{blue!60!black}\bfseries}]{Module\ }{\ },
  moredelim=**[is][\beginlstdelim{Record\ }{\ }{\color{green!50!black}\bfseries}{\color{blue!60!black}\bfseries}]{Record\ }{\ },
  morecomment=[n]{(*}{*)},%
  morestring=[d]",%
  literate={=>}{{$\Rightarrow$}}1
  {->}{{$\,\to\,$}}1
  {<-}{{$\leftarrow$}}1
  {>->}{{$\rightarrowtail$}}2
  {<->}{{$\leftrightarrow$}}1
  {forall}{{\color{blue!60!black}\bfseries$\forall$}}1
  {exists}{{\color{blue!60!black}\bfseries$\exists$}}1
  {<>}{{$\neq$}}1
  {<=}{{$\leq$}}1
  {>=}{{$\geq$}}1
  {:=}{{$\triangleq$}}1
  {\/\\}{{$\wedge$}}1
  {|-}{{$\vdash$}}1
  {\\\/}{{$\vee$}}1
  {'}{'}1
  {⟦}{{$\llbracket$}}1
  {⟧}{{$\rrbracket$}}1
  {-->}{{$\longrightarrow$}}1
}

\lstdefinelanguage{lfsc}{
  language=lisp,
    alsoletter={!}{\%}{@}{\\},
  keywords={check,define,declare,program},
  classoffset=1,
  morekeywords={int,mpz,th_holds,holds,term,sort,type,match,fail,default},keywordstyle=\color{blue!60!black}\bfseries,
  classoffset=2,
  keywords={\%,@,!,\\},keywordstyle=\color{violet}\bfseries,
  classoffset=0,
  moredelim=**[is][\beginlstdelim{define\ }{\ }{\color{green!50!black}\bfseries}{\color{green!50!black!50!white}\bfseries}]{define\ }{\ },
  moredelim=**[is][\beginlstdelim{declare\ }{\ }{\color{green!50!black}\bfseries}{\color{blue!80!black!50!white}\bfseries}]{declare\ }{\ },
  moredelim=**[is][\beginlstdelim{program\ }{\ }{\color{green!50!black}\bfseries}{\color{blue!80!black!50!white}\bfseries}]{program\ }{\ },
  escapechar=\&
}

\lstdefinelanguage{smtcoq}{
  alsoletter=\#0123456789\=,
  classoffset=0,
  keywords={or,and,not,impl,true,false,\=,->},keywordstyle=\color{black}\bfseries,
  classoffset=1,
  morekeywords={0,1,2,3,4,5,6,7,8,9,10,11,12,13,14,15,16,16,18,19,20},keywordstyle=\color{violet}\bfseries,
  classoffset=0,
  sensitive=true,
  moredelim=**[is][\beginlstdelim{:(}{\ }{}{\color{green!50!black}\bfseries}]{:(}{\ },
}

\lstdefinelanguage{ocaml}{
  language=[Objective]caml,
  identifierstyle=\ocidstyle
}

\newcommand*\ocidstyle{%
        \expandafter\id@style\the\lst@token\relax
}
\def\id@style#1#2\relax{%
        \ifcat#1\relax\else
                \ifnum`#1=\uccode`#1%
                        \color{blue!60!black}
                \fi
        \fi
}

\newenvironment{tcb}[2][\footnotesize]{%
  \tcblisting{enhanced jigsaw,lines before break=3,
    listing only,colback=bg,colframe=bg,enlarge
    top by=0mm,top=5pt,bottom=0pt,left=2pt,right=2pt,enhanced,
    before={\vspace{10pt}},
    after={\par\vspace{5pt}\noindent},
    listing options={language=#2,basicstyle={\ttfamily#1\upshape}}%
    }}{\endtcblisting}

\newenvironment{tcbbr}[2][\footnotesize]{%
  \tcblisting{enhanced jigsaw,breakable,lines before break=3,
    listing only,colback=bg,colframe=bg,enlarge
    top by=0mm,top=5pt,bottom=0pt,left=2pt,right=2pt,enhanced,
    before={\vspace{10pt}},
    after={\par\vspace{5pt}\noindent},
    listing options={language=#2,basicstyle={\ttfamily#1\upshape}}%
    }}{\endtcblisting}

\definecolor{bgcolor}{HTML}{E0E0E0}

\newcommand{\code}[1]{\colorbox{bg}{\lstinline!#1!}}

\newcommand{\lstin}[1]{\lstinline|#1|}
\RenewCommandCopy{\theHtheorem}{\thetheorem}
\RenewCommandCopy{\theHlemma}{\thelemma}
\RenewCommandCopy{\theHexample}{\theexample}
\RenewCommandCopy{\theHdefinition}{\thedefinition}
\RenewCommandCopy{\theHremark}{\theremark}

\usepackage{silence}   
\newcommand{\ltgc}{\ltc}
\newcommand{\revised}[1]{\textcolor{black}{#1}}
\newcommand{\ltc}{\revised{local type environment}} 
\newcommand{\Ltc}{\revised{Local type environment}}
\newcommand{\LTC}{\revised{Local Type Environment}}
\newcommand{\LTgC}{\LTC}
\newcommand{\ctext}{\revised{environment}}
\newcommand{\Ctext}{\revised{Environment}}
\title{Formally Verified Liveness with Multiparty Session Types in Rocq}

\titlerunning{Formally Verified Liveness with Multiparty Session Types in Rocq}

\author{Omer Keskin}{University of Edinburgh, UK}{O.S.Keskin@SMS.ed.ac.uk}{https://orcid.org/0009-0002-7197-5158}
{}
\author{Nobuko Yoshida}
{University of Oxford, UK}{
nobuko.yoshida@cs.ox.ac.uk 
}{
https://orcid.org/0000-0002-3925-8557
}{EPSRC EP/T006544/2 , EP/T014709/2,
EP/Y005244/1, EP/V000462/1, EP/X015955/1, EU Horizon 101093006 and UKRI 10066667, Advanced Research and Invention Agency (ARIA), EP/Z533749/1 and a grant from the Simons Foundation.}
\author{Rob van Glabbeek}{University of Edinburgh, UK\newline
School of Computer Science and Engineering, University of New South Wales, Sydney,  Australia \and \url{https://theory.stanford.edu/~rvg/}} {rvg@cs.stanford.edu}{https://orcid.org/0000-0003-4712-7423}{Supported by Royal Society Wolfson Fellowship RSWF\textbackslash R1\textbackslash 221008}

\supplementdetails[subcategory={Source}]{Software}{\rocqrepo}

\authorrunning{O. Keskin, N. Yoshida and R.\,J. van Glabbeek}

\Copyright{Omer Keskin, Nobuko Yoshida and Rob van Glabbeek}

\ccsdesc[500]{Theory of computation~Type theory}
\ccsdesc[500]{Theory of computation~Program verification}
\ccsdesc[300]{Theory of computation~Operational semantics}

\keywords{Multiparty Session Types, Liveness, Safety, Fairness, Deadlock-Freedom, Endpoint Projection, Subtyping, Rocq, Coinduction, Property Verification}

\category{}

\relatedversion{}

\acknowledgements{We deeply thank Burak Ekici for his collaborations, guidance and detailed feedback. We also thank the ITP'26 reviewers for the detailed feedback.}

\nolinenumbers

\hideLIPIcs  

\EventEditors{Ekaterina Komendantskaya and Tobias Nipkow}
\EventNoEds{2}
\EventLongTitle{17th International Conference on Interactive Theorem Proving (ITP 2026)}
\EventShortTitle{ITP 2026}
\EventAcronym{ITP}
\EventYear{2026}
\EventDate{July 26--29, 2026}
\EventLocation{Lisbon, Portugal}
\EventLogo{}
\SeriesVolume{382}
\ArticleNo{4}

\begin{document}

\maketitle

\begin{abstract}
Multiparty session types (MPST) offer a framework for the description of communication-based
protocols involving multiple participants. In the \textit{top-down} approach to MPST, 
the communication pattern of the session is described using a \textit{global type}. Then 
the global type is \textit{projected} on to a \textit{local type} for each participant,
and the individual processes making up the session are type-checked against these projections.
Typed sessions possess certain desirable properties such as \textit{safety}, \textit{deadlock-freedom} and 
\textit{liveness}.

In this work, we present the first mechanised proof of liveness
for synchronous multiparty session types in the Rocq Proof Assistant. 
Building on recent work, we represent global and local types
as coinductive trees using the Paco library. We use a coinductively defined \textit{subtyping} relation 
on local types together with another coinductively defined \textit{plain-merge} projection
relation relating local and global types.
We then \textit{associate} collections of local types, or \textit{{\ltc}s}, with 
global types using these projection and subtyping relations, and prove an \textit{operational correspondence}
between a {\ltc} and its associated global type. We utilise this
association relation to prove the safety and liveness of associated {\ltc}s
and, consequently, the multiparty sessions typed by these {\ctext}s.  

Besides clarifying the often informal proofs found in the MPST literature,
our Rocq mechanisation also enables the certification of liveness properties
of communication protocols. Our contribution amounts to around 14K
lines of Rocq code, 
available at \url{\rocqrepo}.

\end{abstract}

\newcommand{\rocqlink}[1]{\href{#1}{\includegraphics[width=11pt]{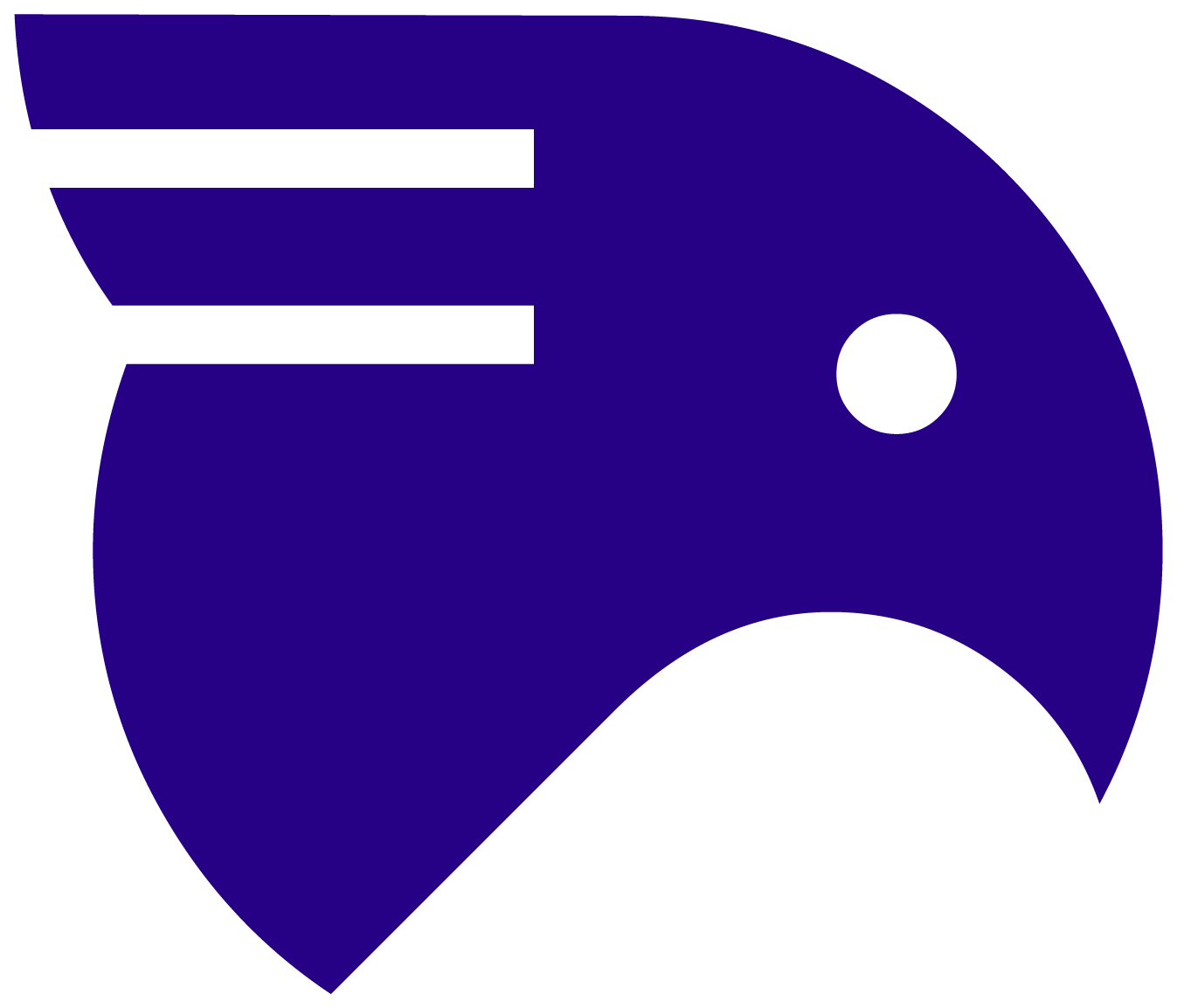}}}

\section{Introduction}
\label{sec:introduction}
Multiparty session types \cite{honda2008} provide a type discipline for the correct-by-construction specification of message-passing protocols.
Desirable protocol properties guaranteed by session types include \textit{communication safety} 
(the labels and types of senders' payloads cohere with the capabilities of the receivers),
\textit{deadlock-freedom} (also called \textit{progress} or \textit{non-stuck property} \cite{srpaper}) 
(it is possible for the session to progress so long as it has at least one active participant), and 
\textit{liveness} (also called \textit{lock-freedom} \cite{fairnesslock} or \textit{starvation-freedom} \cite{castro2026synthetic})   
(if a process is waiting to send or receive then a communication involving it eventually happens).

There exists two common methodologies for multiparty session types. 
In the \textit{bottom-up} approach, the individual processes making up the session are typed using 
a collection of \textit{participants} and \textit{local types}, that is, a \textit{{\ltc}}, and the properties of the session are examined by model-checking
this {\ltc}. Contrastingly, in the \textit{top-down} approach sessions are typed by a \textit{global type}
that is related to the processes using endpoint \textit{projections} and \textit{subtyping}.
The structure of the global type ensures that the desired properties are satisfied by the session.
These two approaches have their advantages and disadvantages: the bottom-up approach is generally 
able to type more sessions, while type-checking and 
type-inference in the top-down approach tend to be more efficient than model-checking 
the bottom-up system \cite{projsurvey}. 

\revised{
In this work, we present the Rocq \cite{coq} formalisation of a 
synchronous MPST system \cite{SynchronousSubtyping,VeryGentle} that ensures the aforementioned properties
for typed sessions. Our system types ($\vdash$) \textit{sessions} with 
\emph{{\ltc}s}.
Our local types closely mimic the structure of the processes; however they are not enough
to guarantee the properties we desire. To that end, we target typability 
by a certain set of well-behaved {\ltc}s that are \textit{associated} with
a global type.}   

\revised{
The \textit{association} ($\assoc$) \cite{LessIsMoreRevisited,PY2025} relation relates {\ltc}s and global types
using (coinductive plain) projection \cite{tirore2023sound} and \emph{subtyping}. 
It ensures \textit{operational correspondence}
between the labelled transition system (LTS) semantics we define 
for {\ltc}s and global types. By exploiting the structure of the global type and
relating the transition semantics for sessions, 
{\ltc}s and global types, we show that 
if an associated {\ltc} $\Gamma$ types a session $\M$, 
then $\M$ is guaranteed to possess
safety (\cref{theo-type-safe}), deadlock-freedom (\cref{theo-progress}) and 
liveness (\cref{theo-sess-live}).
}

\revised{
To our best knowledge, this work presents the first mechanisation of
liveness for multiparty session types in a proof assistant.
It is also the first work which mechanises
a MPST system
based on the association relation, and
proves its correctness by formalising 
properties of {\ltc}s $\Gamma$ 
(such as safety and deadlock-freedom). }

\begin{figure}
  \begin{minipage}{0.45\textwidth}
\small    
\begin{tikzcd}
\G \arrow[d, "\projf", dotted] \arrow[rd, "\assoc"] \arrow[rr, "\rightarrow",dotted] &   & \G' \arrow[d, "\assoc"]        \\
\T \arrow[d, "\vdash_{\mathsf{P}}", dotted] \arrow[r]                                  & {\{(\pp_i,\T_i) \; |\; i \in I\}=\Gamma} \arrow[d, "\vdash"] \arrow[r, "\rightarrow"] & \Gamma' \arrow[d, "\vdash"] \\
\mathsf{P} \arrow[r]                                                       & {\prod_{i}^{} \pp_i \triangleleft \mathsf{P}_i = \M} \arrow[r, "\rightarrow"]                                                     & \M'                          
\end{tikzcd}
\end{minipage}
\begin{minipage}{0.45\textwidth}
$\T$ refers to a local type, 
\G \;a global type, $\mathsf{P}$ a process. $\projf$ denotes projection 
and $\rightarrow$ denotes reduction. The dotted lines correspond to relations inherited from \cite{srpaper}
while the solid lines denote relations that are new, or rewritten, in
this paper.
\end{minipage}  
\caption{Design overview}
\end{figure}
Our Rocq implementation builds upon the recent formalisation of subject reduction for MPST by 
Ekici et al.\ \cite{srpaper},
which itself is based on \cite{SynchronousSubtyping}.
The methodology in \cite{srpaper} takes an equirecursive approach where an inductive syntactic global or local type is identified 
with the coinductive tree obtained by fully unfolding the recursion. 
It then defines a coinductive projection relation between global and local type trees,
the LTS semantics for global type trees, and typing rules for the session calculus
outlined in \cite{SynchronousSubtyping}. 
We extensively use these definitions and the lemmas concerning them, but we depart from and extend 
\cite{srpaper} in numerous ways by introducing {\ltc}s, their correspondence with global types and 
a new typing relation. Our addition to the code amounts to around 14K lines of Rocq, which we link
throughout with the symbol \includegraphics[width=11pt]{img/icon-rocq-blue.png}.
\\ \indent As with \cite{srpaper}, our implementation heavily uses the parameterised coinduction technique 
of the Paco \cite{paco} library. Namely, our liveness property is defined using 
possibly infinite \textit{execution traces} which we represent as coinductive streams.
The relevant predicates on these traces, such as fairness, are then defined as mixed inductive-coinductive 
predicates using linear temporal logic (LTL) \cite{pnueli1977temporal}.
\revised{In Paco coinductive predicates are defined as the greatest fixpoints of
inductive relations parameterised by an auxiliary relation representing the 
knowledge accumulated during the course of the proof. 
Statements on these predicates can then be proved by incrementally adding to
this knowledge over the course of the proof, sidestepping the slow and 
non-compositional syntactic guardedness checks needed for the soundness of 
Rocq's native \lstin{cofix} tactic. 
Our approach using LTL and parameterised coinduction
results in compositional and clear proofs.}
\\ \indent \textbf{Outline.} In \cref{sec-procs} we define our session calculus and its LTS semantics. 
In \cref{sec:types} we recapitulate the definitions of local and global type trees, and the subtyping and projection relations on them, from \cite{srpaper}. 
In \cref{sec:lts} we give LTS
semantics to {\ltc}s and global types, and detail the association relation between them.
In \cref{sec-props} we define safety and liveness for {\ltc}s, and prove that they hold
for contexts associated with a global type tree. In \cref{sec-proc-props} we give the typing rules for 
our session calculus, and prove the desired properties of typable sessions.

\section{Synchronous Multiparty Session Calculus}
\label{sec-procs}
We introduce a simple synchronous multiparty session calculus \cite{VeryGentle} that our type system will be used on. 
\subsection{Processes and Multiparty Sessions}
\begin{definition}[Expressions and Processes]
  \label[definition]{def:processes}
  We define processes as follows:\\[1mm]
  \centerline{
$\PT ::= \prt{p}!\ell(\kf{e}).\PT \SEP \sum_{i \in I}
\prt{p}?\ell_i(x_i).\PT_i \SEP \cond{\kf{e}}{\PT}{\PT} \SEP \mu \Xv.\PT \SEP \Xv \SEP \textbf{0}$}\\[1mm]
    where \kf{e} is an expression, which is either a variable, a value such as \texttt{true}, $0$ or $-3$, 
    or a term built from expressions with operators
    such as \texttt{succ}, \texttt{neg}, $\neg$ and   
    non-deterministic choice $\sendsign$. 
\end{definition}
Process $\prt{p}!\ell(\kf{e}).\PT$ sends the value of expression
$\kf{e}$ 
with label $\ell$ to participant $\pp$, and continues with process $\PT$.
Process $\sum_{i \in I} \prt{p}?\ell_i(x_i).\PT_i$ receives a value from $\pp$ with any label  
$\ell_i$ where $i{\in} I$, with $I$ being a finite non-empty index set, binding the result to $x_i$ and continuing with $\PT_i$, depending on which $\ell_i$
the value was received with. $\Xv$ is a recursion variable, $\mu \Xv.\PT$ is a recursive process,
$\cond{\kf{e}}{\PT}{\PT}$ is a conditional and $\textbf{0}$ is a terminated process. 
\revised{We always assume that recursion is guarded}.

Processes can be composed in parallel into sessions.
\begin{definition}[Multiparty Sessions]\label[definition]{def:sessions}
  Multiparty sessions are defined as follows.
    \[
    \M \;::=\;%
    {\prt{p}} \triangleleft \PT \quad \SEP \quad(\M  \mid \M)
    \quad \SEP \quad \mathcal{O}
  \]
\end{definition}
${\prt{p}} \triangleleft \PT$ denotes that participant $\pp$ is running the process $\PT$,
$\mid$ indicates parallel composition. We write $\displaystyle  \prod_{i \in I}^{} \pp_i \triangleleft \PT_i$
to denote the session formed by $\pp_i$ running $\PT_i$ in parallel for all $i \in I$. 
$\mathcal{O}$
is an empty session with no participants, that is, the unit of parallel composition.
In Rocq processes and sessions are defined with the inductive types \lstin{process} \rocqlink{\rurl{\stbsrc process.v}{L11-L17}} and 
\lstin{session} \rocqlink{\rurl{\stlsrc session.v}{L12-L15}}. 
\revised{We follow \cite{srpaper} and represent the continuations in a receiving process using a \lstin{list} of \lstin{option} types. 
In a continuation \lstin{gcs : list (option process)}, index \lstin{k} (using zero-indexing) being equal to 
\lstin{Some P_k} means that $\ell_k (x).\PT_k$ is available in the continuation.
Similarly index \lstin{k} being equal to \lstin{None} or being out of bounds of the list means 
that the message label $\ell_k$ is not present in the continuation. The function \lstin{onth} \rocqurl{STBase/src/header}{L95-L106} 
formalises this convention in Rocq. This representation is also used for local (\cref{def:local_type_trees}) 
and global (\cref{def:global_type_trees}) type trees.}
\begin{tcbbr}{Rocq}
Inductive process : Type := 
  | p_send : part -> label -> expr -> process -> process
  | p_recv : part -> list(option process) -> process 
  | p_ite : expr -> process -> process -> process
  | p_rec : process -> process
  | p_var : nat -> process
  | p_inact : process.  
Inductive session: Type :=
  | s_ind : part   -> process -> session
  | s_par : session -> session -> session
  | s_zero : session.
\end{tcbbr}

\subsection{Structural Congruence and Operational Semantics}

We define the operational semantics for sessions by the means of a labelled transition system.
We omit the semantics for expressions as they are standard and are found in \cite{SynchronousSubtyping}. 

We start by defining a structural congruence relation $\equiv$ on sessions which expresses the commutativity, associativity and 
unit of the parallel composition operator \rocqlink{\rurl{STLive/src/session.v}{L37-L46}}.
For reductions, we use labelled \textit{reactive} semantics \cite{fairnesslock,castellani_reversible_2019}
which doesn't contain explicit silent $\tau$ actions for internal reductions (that is, evaluation of $\kf{if}$-expressions
and unfolding of recursion) while still considering $\beta$-reductions
up to those internal reductions by using an unfolding relation. 
This stands in contrast to the more standard semantics used in \cite{srpaper,SynchronousSubtyping,fairnesslock}.
For the advantages of our approach see \cref{remark-reactive-justif}.
\label{def-sess-semantics}

\begin{table}[ht]
{\footnotesize
\[
\begin{array}{@{}l@{}}
  \inferrule[\rulename{R-comm}]
  {j\in I \quad e \downarrow v}
  {\pp \triangleleft \sum^{}_{i\in I} \tin\pq{\ell_i}{x_i}.\PT_i \ \mid \  \pq \triangleleft \tout{\pp}{\ell_j}{\kf{e}}.\QT \ \mid \ \N \ \ 
  \xrightarrow{(\pp,\pq)\ell_j} \ \ 
    \pp \triangleleft \PT_j[v/x_j] \ \mid \ \pq \triangleleft \QT \ \mid \ \N}
\quad
    \inferrule[\rulename{Unf-trans}]
  {\M \Rrightarrow \M' \quad \M' \Rrightarrow \N }
  {\M \Rrightarrow \N}  
    \\[10mm]
    \inferrule[\rulename{R-unfold}]
  {\M \Rrightarrow \M' \quad \M' \lts{\lambda} \N' \quad \N' \Rrightarrow \N}
  {\M \lts{\lambda} \N}
  \qquad
    \inferrule[\rulename{Unf-struct}]
  {\M \equiv \N}
  {\M \Rrightarrow \N}
  \qquad
    \inferrule[\rulename{Unf-condt}]
  {e \downarrow \text{true}}
  {\pp \triangleleft \text{ if } e \text{ then } \PT \text{ else } \QT \ \mid \ \N \ \ 
  \Rrightarrow \ \
    \pp \triangleleft \PT \ \mid \ \N}    
      \\[5mm]
      
  \inferrule[\rulename{Unf-rec}]
  {}
  {\pp \triangleleft \mu \Xv.\PT \mid \N \ \ 
  \Rrightarrow
    \pp \triangleleft \PT[\mu \Xv.\PT / \Xv] \ \mid \ \N}  
  \quad
  \inferrule[\rulename{Unf-condf}]
  {e \downarrow \text{false}}
  {\pp \triangleleft \text{ if } e \text{ then } \PT \text{ else } \QT \ \mid \ \N \ \ 
  \Rrightarrow \ \
  \pp \triangleleft \QT \ \mid \ \N}\\[5mm]
    \inferrule[\rulename{sc-sym}]
  {}
  {\M  \mid \N  \equiv \N \mid \M} 
  \qquad
  \inferrule[\rulename{sc-assoc}]
  {}
  {(\mathcal{L}  \mid \M) \mid \N  
  \equiv \mathcal{L}  \mid (\M \mid \N)} 
  \qquad
  \inferrule[\rulename{sc-o}]
  {}
  {\M  \mid \mathcal{O}  
  \equiv \M }
\end{array}
\]}
\caption{Structural Congruence, Unfolding and Reductions of Sessions}
\label{tbl:unf}
\end{table}

In \cref{tbl:unf}, $\M \Rrightarrow \N$ means that 
$\M$ can transition to $\N$ through some internal actions, that is, a reduction that doesn't involve a communication. 
We say that $\M$ \textit{unfolds} to $\N$. Then
\rulename{R-comm} captures communications between processes, and \rulename{R-unfold} lets us consider reductions up to
unfoldings.

In Rocq, the unfolding is captured by the predicate \lstin{unfoldP : session -> session -> Prop} \rocqurl{STLive/src/session}{L54-L63}
and \lstin{betaP_lbl M lambda M'} \rocqurl{STLive/src/session}{L640-L645} denotes 
$\M \lts{\lambda} \M'$. We write $\M \lts{} \M'$ if $\M \lts{\lambda} \M'$
for some $\lambda$, which is written \lstin{betaP M M'} in Rocq.
We write $\lts{}^*$ to denote the reflexive transitive closure of
$\lts{}$, which is called \lstin{betaRtc} \rocqurl{STLive/lemma/live_proc}{L25} in Rocq. 

\section{The Type System}
We briefly recap the core definitions of local and global type trees, 
subtyping and projection from
\cite{SynchronousSubtyping}. We take an equirecursive approach and 
work directly on the possibly infinite local and global type trees 
obtained by unfolding the recursion
in guarded syntactic types; details of this approach
can be found in \cite{srpaper} and hence are omitted here.  
\label{sec:types}
\subsection{Local Type Trees}
We start by defining the sorts that will be used to type expressions, 
and local types that will be used to type single processes.
\begin{definition}[Sorts and Local Type Trees]
\label[definition]{def:local_type_trees}
We define three sorts: $\tint$, $\tbool$ and $\tnat$.
Local type trees are then defined coinductively with the following syntax \rocqurl{STBase/src/local}{L11-L14}: 
\begin{align*}
    \T ::= \quad &\tend 
     \SEP \procinset{\prt{p}}{\ell_i(\S_i)}{\T_i}{i \in I} 
     \SEP \procoutset{\prt{p}}{\ell_i(\S_i)}{\T_i}{i \in I}
\end{align*}  
\end{definition}

In the above definition, $\tend$ represents a role that has finished communicating.
 $\ltrec{\prt{p}}{\ell_i(\S_i).\T_i}_{i \in I}$ denotes a role that may, from any $i \in I$,
 with $I$ being a non-empty finite indexing set, 
receive a value of sort $S_i$ with message label $\ell_i$ and continue with $\T_i$.
Similarly, $\ltsend{\prt{p}}{\ell_i(\S_i).\T}_{i \in I}$  represents a role that may choose
to send a value of sort $S_i$ with message label $\ell_i$ and continue with $\T_i$ for any $i \in I$.
Local type trees are expressed in Rocq with the following:
\begin{tcbbr}{Rocq}
Inductive sort: Type := | sbool: sort | sint : sort | snat : sort.
CoInductive ltt: Type :=
  | ltt_end : ltt
  | ltt_recv: part -> list (option(sort*ltt)) -> ltt
  | ltt_send: part -> list (option(sort*ltt)) -> ltt.
\end{tcbbr}  
\revised{As with processes, we represent the continuations using a \lstin{list} of \lstin{option} types. 
Index $k$ of the continuation being a \lstin{Some} value means that label $\ell_k$ exists in the continuation.}
\subsection{Subtyping}
We define the subsorting relation on sorts and the process-oriented \cite{Gay2016} subtyping relation on local type trees.
\begin{definition}[Subsorting and Subtyping] \label[definition]{def:subtyping}
  Subsorting $\subso$ is the least reflexive binary relation that satisfies $\tnat \subso \tint$. 
  Subtyping $\subtp$ is the largest relation between local type trees coinductively defined by the following rules:
  \[
  \small
  \begin{array}[t]{@{}c@{}}
    \cinferrule[\rulesubend]{
    }{
    \tend \subtp \tend
    }
    \quad
  \cinferrule[\rulesubin]{
    \forall i \in I: \qquad S'_i \subso S_i \qquad \T_i \subtp \T'_i
  }
  {
    \procinset{p}{\ell_i(S_i)}{\T_i}{i \in I \cup J} \subtp 
    \procinset{p}{\ell_i(S'_i)}{\T'_i}{i \in I}
  }
  \quad 
  \cinferrule[\rulesubout]{
    \forall i \in I: \qquad S_i \subso S'_i \qquad \T_i \subtp \T'_i
  }
  {
    \procoutset{p}{\ell_i(S_i)}{\T_i}{i \in I}
    \subtp
    \procoutset{p}{\ell_i(S'_i)}{\T'_i}{i \in I \cup J}
    }
  \end{array}
  \]
\end{definition}
Intuitively, $\T_1 \subtp \T_2$ means that a role of type $\T_1$ can be supplied anywhere 
a role of type $\T_2$ is needed.
$\rulesubin$ captures the fact that we can supply a role that is able to receive 
more labels than specified, and $\rulesubout$ captures that we can supply a role
that has fewer labels available to send. Note the contravariance of the sorts in $\rulesubin$;
if the supertype demands the ability to receive an $\tnat$ then the subtype can 
receive $\tnat$ or $\tint$.

In Rocq, the subtyping relation \lstin{subtypeC : ltt -> ltt -> Prop} is expressed \rocqurl{STBase/src/local}{L196-L205} as a greatest fixpoint using the 
\lstin{Paco} library \cite{paco}; 
for details we refer to \cite{SynchronousSubtyping}.
\subsection{Global Type Trees}
We now define global types which give a bird's eye view of the whole protocol.
As before, we work directly on infinite trees and omit the details which can be found in 
\cite{srpaper}.
\begin{definition}[Global type trees]
  \label[definition]{def:global_type_trees}
  We define global type trees coinductively as follows \rocqurl{STBase/src/global}{L11-L13}:\vspace{-2pt} 
  
  \begin{minipage}{0.43\textwidth}
  \begin{align*}
\G &::= \tend 
     \SEP \GvtPair{\prt{p}}{\prt{q}}{\ell_i(\S_i).\G_i}_{i\in I}
    \end{align*}  
  \end{minipage}
  \begin{minipage}{0.5\textwidth}
    \begin{tcb}{Rocq}
CoInductive gtt: Type :=
  | gtt_end    : gtt 
  | gtt_send   : part -> part -> list (option (sort*gtt)) -> gtt.
    \end{tcb}
    \end{minipage}    
\end{definition}
$\tend$ denotes a protocol that has ended, $\GvtPair{\prt{p}}{\prt{q}}{\ell_i(\S_i).\G_i}_{i\in I}$
denotes a protocol where for any $i \in I$, with $I$ being a non-empty finite index set, participant $\pp$ may send a value of sort 
$S_i$ to another participant $\pq$ via message label $\ell_i$, after which the protocol continues as $\G_i$.
We further define a function $\funprt(\G)$ that denotes the 
participants of the global type $\G$ as the least 
solution\footnote{This is a simplified presentation of the definition of $\funprt$ in Rocq;
for technical details see \cite{srpaper}.} to the following equations:
\vspace{-2pt}
\begin{align*}
  &\funprt(\tend)=\emptyset \quad
  &\funprt(\GvtPair{\prt{p}}{\prt{q}}{\ell_i(\S_i).\G_i}_{i\in I})=
      \{\prt{p},\prt{q}\} \cup \bigcup _{i \in I} {\funprt(\G_i)}
\end{align*}
\mbox{}\\[-6pt]
In Rocq $\funprt$ is captured with the 
predicate \lstin{isgPartsC : part -> gtt -> Prop} \rocqurl{STBase/src/part}{L15-L16}, where \lstin{isgPartsC p G} 
denotes $\prt{p} \in \funprt(\G)$.  

\subsection{Projection}
We now define coinductive projections with plain merging 
(see \cite{projsurvey} for a survey of other notions of merge). 
\begin{definition}[Projection] \label[definition]{def:projections}
    The projection of a global type tree onto a participant ${\prt{r}}$ is the largest relation $\upharpoonright_{\prt{r}}$ between global type trees and
    local type trees such that, whenever $\G\proj{r}\T$:
    \begin{itemize}
      \item ${\prt{r}} \notin \participant{\G}$ implies $\T = \tend$; \hfill {\ruleprojend}
      \item $\G = \GvtPair{\prt{p}}{\prt{r}}{\ell_i(S_i).\G_i}_{i \in I}$ implies 
      $\T = \procinset{\prt{p}}{\ell_i(\S_i)}{\T_i}{i \in I}$ and $\forall i \in I, \G_i\proj{r}\T_i $ \hfill {\ruleprojin}
      \item $\G = \GvtPair{\prt{r}}{\prt{q}}{\ell_i(S_i).\G_i}_{i \in I}$ implies 
      $\T = \procoutset{\prt{q}}{\ell_i(\S_i)}{\T_i}{i \in I}$ and $\forall i \in I, \G_i\proj{r}\T_i $ \hfill {\ruleprojout}
      \item $\G = \GvtPair{\prt{p}}{\prt{q}}{\ell_i(S_i).\G_i}_{i \in I} \text{and } {\prt{r}} \notin \{{\prt{p}},{\prt{q}}\}$ implies 
      that $\forall i \in I, \G_i\proj{r}\T $ \hfill {\ruleprojcont}
    \end{itemize}
\end{definition}

Informally, the projection of a global type tree $\G$ onto a participant $\prt{r}$
extracts a role for participant  $\prt{r}$ from the protocol whose bird's-eye view 
is given by $\G$. \ruleprojend expresses that if $\prt{r}$ is not a participant of $\G$ then
$\prt{r}$ does nothing in the protocol. \ruleprojin and \ruleprojout handle the cases where $\prt{r}$
is involved in a communication in the root of $\G$. $\ruleprojcont$ says that, if $\prt{r}$ is not
involved in the root communication of $\G$ and all continuations of $\G$ project on to the same type,
then $\G$ also projects on to that type.
In Rocq, projection is defined as a Paco greatest fixpoint with the relation
\lstin{projectionC : gtt -> part -> ltt -> Prop} \rocqurl{STBase/src/projection}{L9-L32}.

Using a result from \cite{srpaper} \rocqurl{STBase/lemma/projection_helper}{L513-L530}, we can regard projection as a partial
function. We write $\projfn{\G}{\pr}=\T$ when $\G\proj{\pr}\T$. Furthermore we will frequently
be making assertions about subtypes of projections of a global type e.g.
$\T \subtp \projfn{\G}{\pr}$. In our Rocq implementation we define the predicate 
\lstin{issubProj : ltt -> gtt -> part -> Prop}
as a shorthand for this.
\subsection{Balancedness, Global Tree Contexts and Grafting}
We introduce an important constraint on the types of global type trees we will consider, \textit{balancedness}.
We omit the technical details of the definition and the Rocq implementation; they can be found in 
\cite{SynchronousSubtyping} and \cite{srpaper}.
\begin{definition}[Balanced Global Type Trees]\label[definition]{def-balance}
A path on a tree is a sequence of nodes starting from the root such that every node is a child of the preceding one. 
A global type tree $\G$ is balanced if for any subtree $\G'$ of $\G$, there exists $k$ such that
for all $\pp \in \funprt(\G')$, $\pp$ occurs on every path from the root of $\G'$ that has 
length at least $k$ or ends in $\tend$.
\end{definition}
Balancedness is a regularity condition that imposes a notion of \textit{liveness} on the protocol
described by the global type tree. Indeed, our liveness results in \cref{sec-proc-props} 
hold only for balanced global types. Another reason for formulating balancedness is that it allows 
us to use the \textit{grafting} technique, turning proofs by coinduction on infinite trees to proofs by induction
on finite \textit{global type tree contexts}, or \textit{g-contexts} for short.

\begin{definition}[g-contexts and Grafting]\label{def:global-ctx}
  g-contexts are defined inductively with the following syntax \rocqurl{STBase/src/gttreeh}{L10-L12}:
  
  \begin{minipage}{0.45\textwidth}
  \begin{align*}
    \Gcx &::= \quad \GvtPair{p}{q}{\ell_i(S_i).\Gcx_i}_{i \in I} 
   \SEP \hole_j
  \end{align*}  
  \end{minipage}
  \begin{minipage}{0.5\textwidth}
  \begin{tcb}{Rocq}
Inductive gtth: Type :=
  | gtth_hol    : fin -> gtth
  | gtth_send   : part -> part -> list (option (sort * gtth)) -> gtth.
  \end{tcb} 
  \end{minipage}
   Given a g-context $\Gcx$ whose holes are in the indexing set $J$
  and a set of global types $\{\G_j\}_{j \in J}$,  the grafting $\Gcx[\G_j]_{j \in J}$ denotes 
  the global type tree obtained by substituting $\hole_j$ with $\G_j$ in $\Gcx$.

  In Rocq the indexed set $\{\G_j\}_{j \in J}$ is represented using a \lstin{list (option gtt)}.
  Grafting is expressed with the  inductive relation 
  \lstin{typ_gtth : list (option gtt) -> gtth -> gtt -> Prop} \rocqurl{STBase/src/gttreeh}{L33-L38}.
  \lstin{typ_gtth gs gcx gt} means that the grafting of the set of global type trees 
  \lstin{gs} onto the g-context \lstin{gcx} results in the tree \lstin{gt}.
  We additionally define $\funprt$ and \lstin{ishParts} on g-contexts analogously 
  to $\funprt$ and \lstin{isgPartsC} on trees \rocqurl{STBase/src/part}{L24-L28}. 
\end{definition}
A g-context can be thought of as the finite prefix of a global type 
tree, where holes $\hole_j$ indicate the cutoff points. g-contexts
are related to global type trees with the \textit{grafting} 
operation that fills in the holes with type trees.
The following lemma relates g-contexts to balanced
global type trees. 
\begin{lemma}[Proper Grafting Lemma, \cite{srpaper} \rocqurl{STBase/lemma/decidable}{L206-L211}]
  \label[lemma]{lem-grafting}
\revised{If $\G$ is a balanced global type tree and $\pp \in \funprt(\G)$,
then there is a g-context $\Gcx$ and 
an indexed set $\{\G_i\}_{i \in I}$ such that $\Gcx[\G_i]_{i \in I} = \G$,
$\pp \notin \funprt(\Gcx)$, and for all $i \in I$, $\G_i$ is in the shape $\tend$,
$\GvtPair{p}{q}{\dots}$ or $\GvtPair{q}{p}{\dots}$. In this case, we refer to $\Gcx$ and $\{\G_i\}_{i \in I }$
as the $\pp$-grafting of $\G$. When we do not care about the $\G_i$ we may just say that $\G$ is 
$\pp$-grafted by $\Gcx$.}
\end{lemma}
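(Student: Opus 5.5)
The idea is to cut $\G$ along every path at the first node involving $\pp$, and to use balancedness to show the resulting prefix is finite. Let $k$ be the bound given by \cref{def-balance} for the subtree $\G' = \G$ itself. Since $\pp \in \funprt(\G)$, $\pp$ occurs on every path from the root of length at least $k$ and on every path ending in $\tend$.

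Define the g-context $\Gcx$ by recursion on a depth parameter $n \le k$, working on subtrees $\G''$ reached along some path from the root. If $\G'' = \tend$, or $\G'' = \GvtPair{r}{s}{\dots}$ with $\pp \in \{\prt{r},\prt{s}\}$, emit a fresh hole $\hole_j$ and set $\G_j := \G''$. Otherwise $\G'' = \GvtPair{r}{s}{\ell_i(S_i).\G''_i}_{i\in I}$ with $\pp \notin \{\prt{r},\prt{s}\}$. In that case emit the node $\GvtPair{r}{s}{\ell_i(S_i).\Gcx_i}_{i \in I}$, where $\Gcx_i$ is built recursively on $\G''_i$ with parameter $n-1$.

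The recursion cannot reach $n=0$ while still in the third case. At depth $k$ the current path has length $k$ and no node on it mentions $\pp$, which contradicts the choice of $k$. A $\tend$ node, by contrast, always stops the recursion. This also covers the case where $\pp$ is absent along a path that terminates early: the path ends in $\tend$, so $\pp$ must occur on it before the end, and we cut there.

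In Rocq one cannot simply pattern-match on a coinductive tree and recurse on it as if it were finite. I would therefore state the construction as an existence lemma, proved by strong induction on $n$. The statement is: for every subtree $\G''$ such that every path from its root of length $\ge n$ meets $\pp$, and every path ending in $\tend$ meets $\pp$, there exist $\Gcx$ and $\{\G_j\}$ with \lstin{typ_gtth} relating them to $\G''$, with $\neg$\lstin{ishParts p} $\Gcx$, and with every $\G_j$ of the required shape.

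In the inductive step, the children $\G''_i$ inherit the hypothesis with $n-1$. Any long path in a child extends to a path in $\G''$ by prepending the root, and the root does not mention $\pp$, so $\pp$ must occur later in the path, inside the child. We then assemble the results for the children.

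The main technical obstacle is the bookkeeping with holes and the \lstin{list (option gtt)} representation. The hole indices coming from different children must be combined into a single list without clashes. The cheapest route is to allow repeated indices: use one hole index per distinct resulting subtree, or simply reindex by concatenating the lists and shifting the indices in each $\Gcx_i$. We then need a lemma that \lstin{typ_gtth} is stable under such shifting and extension of the list. The facts that $\pp \notin \funprt(\Gcx)$ and that the $\G_j$ have the right shape follow directly from the case split, since only non-$\pp$ nodes are emitted as context nodes.
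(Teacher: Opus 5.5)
Your proposal is correct. It is essentially the standard argument for this lemma, which the paper does not reprove but imports from \cite{srpaper}: cut every branch at its first node involving $\pp$ (or at $\tend$), take the balancedness bound $k$ for $\G$ itself to turn the coinductive construction into an induction on depth, and handle the hole indices of \texttt{typ\_gtth} by concatenating and shifting the children's lists.
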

\begin{tcolorbox}[colback=blue!10]
\cref{lem-grafting} allows us to turn proofs by coinduction on infinite trees 
to proofs by induction on the grafting g-context---one of the
main proof techniques used in this work.
\end{tcolorbox}
\begin{remark}
  From now on, all the global type trees we will be referring to are assumed to be balanced.
  When talking about the Rocq implementation, any \lstin{G : gtt} we mention is assumed
  to satisfy the predicate \lstin{wfgC G} \rocqurl{STBase/src/balanced}{L147}, expressing that \lstin{G} corresponds to
  a well-formed \cite[after Definition 24]{srpaper}, balanced type. 
  Furthermore, we will often require that a global type is projectable onto all its participants. 
  This is captured by
  the predicate \lstin{projectableA G = forall p, exists T, projectionC G p T}. As with \lstin{wfgC}, 
  we will be assuming that all types we mention are projectable. 
\end{remark}

\section{Semantics of Global and Local Types}
\label{sec:lts}
In this section we introduce {\ltc}s, and define Labelled Transition System 
semantics on these constructs.

\subsection{{\LTC}s and Reductions} 
We start by defining {\ltgc}s, 
also called \textit{local type contexts} in the related work \cite{LessIsMoreRevisited,PY2025,YHK2026}.

\begin{definition}[{\LTgC}s]\label[definition]{def-type-ctx}
{\Ltc}s are defined as a finite mapping of participants to local type trees
with the following \rocqurl{STLive/src/lcontext}{L22-L26}:

\begin{minipage}{0.4\textwidth}
\begin{align*}
    \Gamma \;::=\;%
    \emptyset \SEP \Gamma, {\prt{p}}:\T
  \end{align*}
\end{minipage}
\begin{minipage}{0.5\textwidth}
  \begin{tcb}{Rocq}
Module M := MMaps.RBT.Make(Nat).
Module MF := 
  MMaps.Facts.Properties Nat M.
Definition tctx: Type := M.t ltt.
\end{tcb}
\end{minipage}
\end{definition}
Intuitively, $\pp$ : {\T} means that participant p is associated with a process that has the type tree T. 
We write $\dom{\Gamma}$ to denote the set of participants occurring in $\Gamma$, and 
abbreviate the singleton type {\ctext} consisting of the pair $\pp$ and  
{\T} as $\pp : \T$. We write $\Gamma(\pp)$ for the type of $\pp$ in $\Gamma$. 
We define the composition $\Gamma_1,\Gamma_2$ iff $\dom{\Gamma_1} \cap \dom{\Gamma_2}=\emptyset$.

In the Rocq implementation we implement {\ltgc}s \revised{\lstin{tctx}} as finite maps of 
participants, which are represented as natural numbers, and local type trees.
We use the finite map implementation of the MMaps library \cite{mmaps}.\pagebreak[3]
We further enforce the non-emptiness of the indexing sets of the local type trees
by positing that for any local type tree in any {\ctext} we mention, 
the well-formedness predicate \lstin{wflttC T} \rocqurl{STLive/src/wfltt}{L10-L18} holds. This is expressed by the predicate
\lstin{tctx_wf: tctx -> Prop} \rocqurl{STLive/src/wfltt}{L55-L56}.

We give LTS semantics to {\ltgc}s.
\begin{definition}[Transition labels] A transition label $\alpha$ has the following form:
  \begin{align*}
    \alpha ::&= \lblrec{p}{q}{\ell(S)}  && 
    \text{(${\prt{p}}$ receives a value of sort ${S}$ from ${\prt{q}}$ with message label ${\ell}$)}\\
    & \SEP \lblsend{p}{q}{\ell(S)} && \text{(${\prt{p}}$ sends a value of sort $S$ to ${\prt{q}}$ with message label $\ell$ )}\\
    & \SEP \lblsync{p}{q}{\ell} && \text{(A synchronised communication from $\pp$ to $\pq$ occurs via label $\ell$)}
  \end{align*}
  We further define the function $\subject{\alpha}$ as \enspace 
  $\subject{\lblrec{p}{q}{\ell(S)}}=\subject{\lblsend{p}{q}{\ell(S)}}=\{\pp\}$ and $\subject{\lblsync{p}{q}{\ell}}=\{\pp,\prt{q} \}$.
\end{definition}
\begin{definition}[Typing {\ctext} reductions] \label[definition]{def-ctx-red}
The typing {\ctext} transition $\lts{\alpha}$ is defined inductively by the following rules:
  \[
\begin{array}[t]{@{}c@{}}
\inferrule[]{
     k \in I }{
      \lbltrans
    {\prt{p} : \procinset{\prt{q}}{\ell_i(\S_i)}{\T_i}{i \in I}} 
    {\lblrec{p}{q}{\ell_k(S_k)}}
    {\prt{p} : \T_k}
    }
    \; \rulename{$\Gamma$-\andsign}
    \;
  \inferrule[]{
     k \in I }{
      \lbltrans
      {\prt{p} : \procoutset{\prt{q}}{\ell_i(\S_i)}{\T_i}{i \in I}} 
      {\lblsend{p}{q}{\ell_k(S_k)}}
      {\prt{p} : \T_k}
    }
    \; \rulename{$\Gamma$-$\oplus$}
  \\\\
    \inferrule[]{
  \lbltrans{\Gamma}{\alpha}{\Gamma'}}
  {
    \lbltrans{\Gamma, \prt{p} : \T}
     {\alpha} {\Gamma', {\prt{p}} : \T} 
  }
  \enspace \rulename{$\Gamma$-,}
  
  \qquad
    \inferrule[]{
    \lbltrans{\Gamma_1}{\lblsend{\prt{p}}{\prt{q}}{\ell(S)}}{\Gamma'_1}
    \qquad  
    \lbltrans{\Gamma_2}{\lblrec{\prt{q}}{\prt{p}}{\ell(S')}}{\Gamma'_2}
    \qquad S \subso \S'
    }{
    \lbltrans{\Gamma_1, \Gamma_2 }
    {\lblsync{p}{q}{\ell}}
    {\Gamma'_1, \Gamma'_2}
    }
    \enspace \rulename{$\Gamma$-$\oplus$\andsign}
\end{array}
\]
We write $\Gamma \lts{\alpha}$ if there exists $\Gamma'$ such that $\Gamma\lts{a}\Gamma'$.  
We define a reduction $\Gamma\lts{}\Gamma'$ to hold iff \; $\lbltrans{\Gamma}{\lblsync{\prt{p}}{\prt{q}}{\ell}}{\Gamma'}$ for some ${\prt{p}}$, ${\prt{q}}$, $\ell$. We write $\Gamma\lts{}$ iff \; $\Gamma\lts{}\Gamma'$ for some $\Gamma'$. 
We write $\lts{}^*$ for the reflexive transitive closure of $\lts{}$.
\end{definition}
$\ruleredsend$ and $\ruleredrec$, express a single participant sending or receiving.
$\ruleredsync$ expresses a synchronised communication where one participant sends while another receives,
and they both progress with their continuation. $\ruleredvar$ shows how to extend an {\ctext}.  
In Rocq typing {\ctext} reductions are defined with the predicate \lstin{tctxR} \rocqurl{STLive/src/lcontext}{L45-L68}.

\begin{tcbbr}{Rocq}
Inductive tctxR: tctx -> label -> tctx -> Prop :=
  | Rsend: ...
  | Rrecv: ...  
  | Rcomm: ...
  | RvarI: ...
  | Rstruct: forall g1 g1' g2 g2' l, tctxR g1' l g2' ->
      M.Equal g1 g1' -> M.Equal g2 g2' -> tctxR g1 l g2.
\end{tcbbr}

The first four constructors in the definition of \lstin{tctxR} correspond to the 
rules in \cref{def-ctx-red}, and \lstin{Rstruct} expresses the indistinguishability
of local {\ctext}s under the \lstin{M.Equal} predicate from the MMaps library. 
\lstin{M.Equal} lets us consider two finite maps with the same keys mapping to the same values as equal,
and is the main notion of equality we use for typing {\ctext}s in this paper.

We illustrate typing {\ctext} reductions with an example.
\begin{example}\label{exam:reductions}
  Let $\Gamma = \{{\prt{p}}:\T_{\prt{p}}, \; {\prt{q}} : \T_{\prt{q}},\; {\prt{r}}: \T_{\prt{r}}\}$ where
  $\T_{\prt{p}} =
  \procoutmult{\prt{q}}{\ell_0(\tint).\T_{\prt{p}},\ell_1(\tint).\tend}$, 
  $\T_{\prt{q}} = \procinmult{\prt{p}}{\ell_0(\tint).\T_{\prt{q}},  \ell_1(\tint).\procoutsingle{\prt{r}}{\ell_2(\tint)}{\tend}}$
  and 
  $\T_{\prt{r}} = \procinmult{\prt{q}}{\ell_2(\tint).\tend}$.
We have the reductions $\Gamma\lts{\lblsend{p}{q}{\ell_0(\tint)}} \Gamma$ and
$\Gamma \lts{\lblrec{\prt{q}}{\prt{p}}{\ell_0(\tint)}} \Gamma$, which synchronise to give the reduction
and $\Gamma \lts{\lblsync{\prt{p}}{\prt{q}}{\ell_0}}  \Gamma$. Similarly via synchronised communication of $\pp$ and $\pq$
via message label $\ell_1$ we get $\Gamma \lts{\lblsync{\prt{p}}{\prt{q}}{\ell_1}} \Gamma'$
where $\Gamma'$ is defined as  $\{{\prt{p}} : \tend, \; {\prt{q}}: 
\procoutsingle{\prt{r}}{\ell_2(\tint)}{\tend}, {\prt{r}} : \T_{\prt{r}}\}$.
We further have that $\Gamma' \lts{\lblsync{q}{r}{\ell_2}} \Gamma_\mathtt{end}$ where
$\Gamma_{\mathtt{end}}$ is defined as 
$\{{\prt{p}}:\tend, \; {\prt{q}} : \tend,\; {\prt{r}}: \tend \}$.

In Rocq, $\Gamma$ is defined the following way \rocqurl{STLive/examples/ex_type_semantics}{L12-L21}:

\begin{tcbbr}{Rocq}
Definition prt_p:=0.
Definition prt_q:=1.
Definition prt_r:=2.
CoFixpoint T_p := ltt_send prt_q [Some (sint,T_p); Some (sint,ltt_end); None].
CoFixpoint T_q := ltt_recv prt_p [Some (sint,T_q); Some (sint, ltt_send prt_r [None;None;Some (sint,ltt_end)]); None].
Definition T_r := ltt_recv prt_q [None;None; Some (sint,ltt_end)].
Definition gamma := M.add prt_p T_p (M.add prt_q T_q (M.add prt_r T_r M.empty)).
\end{tcbbr}

Now $\Gamma \lts{\lblsync{\prt{p}}{\prt{q}}{\ell_0}} \Gamma$  can be expressed as 
\lstin{tctxR gamma (lcomm prt_p prt_q 0) gamma} \rocqurl{STLive/examples/ex_type_semantics}{L63}.
\end{example}

\subsection{Global Type Reductions}
As with {\ltgc}s, we can also define reductions for global types.
\begin{definition}[Global type reductions]
  The global type transition $\lts{\alpha}$ is defined coinductively as follows.
  \[
  \small
\begin{array}[t]{@{}c@{}}
\cinferrule[]{
     k \in I }{
      \lbltrans
    {\GvtPair{p}{q}{\ell_i(S_i).\G_i}_{i \in I}} 
    {\lblsync{p}{q}{\ell_k}}
    {\G_k}
    }
    \enspace \rulename{GR-$ \sendsign \andsign$}
    \\\\
    \cinferrule[]{
     \forall i \in I \enspace \lbltrans{\G_i}{\alpha}{\G'_i} \qquad
     \subject{\alpha} \cap \{\prt{p},\prt{q}\} = \emptyset
     \qquad \forall i \in I \enspace \{\prt{p},\prt{q}\} \subseteq \funprt(\G_i)
     }{
      \lbltrans
    {{\GvtPair{p}{q}{\ell_i(S_i).\G_i}_{i \in I}}} 
    {\alpha}
    {{\GvtPair{p}{q}{\ell_i(S_i).\G'_i}_{i \in I}}}
    }
    \enspace \rulename{GR-Ctx}
\end{array}
\]
\end{definition}
\rulegredsendrec says that a global type tree with root $\pp \rightarrow \pq$ can
transition to any of its children corresponding to the message label chosen by $\pp$. 
\rulegredctx says that if the subjects of $\alpha$ are disjoint from the root and all its children
can transition via $\alpha$, then the whole tree can also transition via $\alpha$, with the root remaining 
the same and just the subtrees of its children transitioning.
In Rocq global type reductions are expressed using the coinductively defined predicate \lstin{gttstepC} \rocqurl{STBase/src/step}{L9-L25}. 
For example, $\lbltrans{\G}{\lblsync{p}{q}{\ell_k}}{\G'}$ translates to \lstin{gttstepC G G' p q k}. 
We refer to \cite{srpaper} for details.

\subsection{Association Between {\LTC}s and Global Types}
We have defined {\ltc}s, which specify protocols bottom-up 
by directly describing the roles of every participant,
and global types, which give a top-down view of the whole protocol, and the transition relations on them.
We relate these local and global definitions by defining \textit{association} between local type
{\ctext} and global types.
\begin{definition}[Association]\label[definition]{def-assoc}
A {\ltgc} $\Gamma$ is associated with a global type tree $\G$, written $\Gamma \assoc \G$,
if the following hold:
\begin{itemize}
  \item For all $\pp \in \funprt(\G)$, $\pp \in \dom{\Gamma}$ and $\Gamma(\pp) \subtp \G \projf{\pp}$.
  \item For all $\pp \notin \funprt(\G)$, either $\pp \notin \dom{\Gamma}$ or $\Gamma(\pp)=\tend$. 
\end{itemize}
\revised{In Rocq this is expressed with the predicate \lstin{assoc : tctx -> gtt -> Prop} \rocqurl{STLive/src/assoc}{L20-L23}}.
\end{definition} 
Informally, $\Gamma \assoc \G$ says that the local type trees in $\Gamma$ 
obey the specification described by the global type tree $\G$. 
\begin{example}\label{exam:assoc}
  In Example \ref{exam:reductions},
  we have that $\Gamma \assoc \G$ where
  $\G := \GvtPair{\prt{p}}{\prt{q}}{\ell_0(\tint).\G,\ell_1(\tint).\GvtPair{\prt{q}}{\prt{r}}{\ell_2(\tint).\tend}}$
  (note that $\G$ is not a balanced global type tree due to the
  infinite path of $\ell_0$ communications that do not involve $\pr$).
  In fact, we have $\Gamma(\ps) = \projfn{\G}{\ps}$ for $\ps \in \{\prt{p},\pq,\pr\}$. 
  Similarly, we have $\Gamma' \assoc \G'$ where
    $\G' := \GvtPair{\prt{q}}{\prt{r}}{\ell_2(\tint).\tend}$.
\end{example}

It is desirable to have the association
be preserved under {\ltc} and global type reductions, that is, 
when one of the associated constructs "takes a step" so should the other. We formalise this \textit{operational correspondence} 
property with 
the following soundness and completeness theorems.

\begin{theorem}[Soundness of Association \rocqurl{STLive/lemma/soundness}{L1075-L1080}]\label{theo-soundness} 
  \revised{If $\Gamma \assoc \G$ and $\G \lts{(\pp,\pq)\ell} \G'$, then there is a {\ltc} $\Gamma'$,
  a global type $\G''$ and a message label $\ell'$ such that $\G \lts{(\pp,\pq)\ell'} \G''$, $\Gamma' \assoc \G''$
  and $\Gamma \lts{(\pp,\pq)\ell'} \Gamma'$.}
\end{theorem}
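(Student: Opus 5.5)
The proof goes by induction on a g-context, using \cref{lem-grafting}. Since $\G \lts{(\pp,\pq)\ell} \G'$, we have $\pp,\pq \in \funprt(\G)$: a step only occurs at a root $\pp \to \pq$ reached through \rulegredctx, and the side condition of \rulegredctx keeps both participants in every child. Applying \cref{lem-grafting} with participant $\pp$ gives a g-context $\Gcx$ and trees $\{\G_i\}_{i\in I}$ with $\Gcx[\G_i]_{i\in I}=\G$ and $\pp\notin\funprt(\Gcx)$. Each $\G_i$ is $\tend$, $\GvtPair{p}{r}{\dots}$ or $\GvtPair{r}{p}{\dots}$. The statement lets us change the label to $\ell'$, and this is essential. $\pp$'s local type may be a subtype of its projection with fewer output labels, so we cannot replay $\ell$ itself. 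Instead we pick, for $\pp$, an output label $\ell'$ that $\Gamma(\pp)$ actually offers.

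We first analyse the leaves. The transition $\G \lts{(\pp,\pq)\ell}\G'$ inverted along $\Gcx$ shows that every leaf $\G_i$ must itself perform $(\pp,\pq)\ell$. We need a small lemma: since $\pp$ does not occur in $\Gcx$, the step can only happen at the leaves, and \rulegredctx at each node of $\Gcx$ forces every branch to step. Then no leaf is $\tend$, and no leaf has a root $\pp\to\pr$ with $\pr\neq\pq$ or $\pr\to\pp$, since $\pp$ would block the step. So every leaf has the form $\GvtPair{p}{q}{\ell_k(S_k).\G_{i,k}}_{k\in K_i}$. By \ruleprojcont, applied through $\Gcx$ (where $\pp\notin\funprt(\Gcx)$), $\G\projf{\pp}$ is a single output type $\procoutset{\pq}{\ell_k(S_k)}{\T_k}{k\in K}$ equal to every leaf's projection. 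Hence all leaves share the label set $K$ and sorts $S_k$. By \rulesubout, $\Gamma(\pp)=\procoutset{\pq}{\ell_k(S'_k)}{\T'_k}{k\in K'}$ with $K'\subseteq K$ nonempty (by \lstin{wflttC}), $S'_k\subso S_k$ and $\T'_k\subtp\T_k$. Choose any $\ell'\in K'$. We need the analogous fact for $\pq$. Grafting with respect to $\pp$ does not remove $\pq$ from $\Gcx$, so the projection onto $\pq$ is an input type only at the leaves, and merging along $\Gcx$ may interleave other actions of $\pq$. Here we use that $\pq$'s projection must also be defined at the $\Gcx$ nodes where it is not involved. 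If $\pq$ occurred in $\Gcx$ above a leaf, then $\G$ could not step via \rulegredctx with subject $\pq$. So $\pq\notin\funprt(\Gcx)$ as well, and the same reasoning gives $\G\projf{\pq}=\procinset{\pp}{\ell_k(S_k)}{\U_k}{k\in K}$. By \rulesubin, $\Gamma(\pq)$ receives at least the labels in $K$, in particular $\ell'$, with sort $S''\supseteq S_{\ell'}$ in the subsort order. So $S'_{\ell'}\subso S_{\ell'}\subso S''$, and \ruleredsync, together with the single-participant rules and \ruleredvar, gives $\Gamma\lts{(\pp,\pq)\ell'}\Gamma'$, where $\Gamma'$ updates $\pp$ and $\pq$ to their $\ell'$ continuations.

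Next we define $\G''=\Gcx[\G_{i,\ell'}]_{i\in I}$ and show $\G\lts{(\pp,\pq)\ell'}\G''$ by induction on $\Gcx$. At a leaf we use \rulegredsendrec. At an internal node we use \rulegredctx; its side condition $\{\pr,\ps\}\subseteq\funprt$ of the children is inherited from the original step $\G\lts{(\pp,\pq)\ell}\G'$, which needed the same conditions, up to the fact that the participant sets of the two results agree.

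The main obstacle is $\Gamma'\assoc\G''$. For $\pp$ and $\pq$ we need $\G''\projf{\pp}=\T_{\ell'}$ and $\G''\projf{\pq}=\U_{\ell'}$, again by induction on $\Gcx$ using \ruleprojcont; then $\T'_{\ell'}\subtp\T_{\ell'}$ finishes these two cases. For every other $\pr$ we need $\G''\projf{\pr}=\G\projf{\pr}$. This requires a lemma, proved by coinduction (paco) combined with induction on $\Gcx$, stating that projection onto an uninvolved participant is preserved by a step. At the leaves, \ruleprojcont says that all $\G_{i,k}$ project to the same type onto $\pr$ as $\G_i$. Participants that disappear from $\funprt(\G'')$ need separate handling. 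Their projection in $\G$ was merged through the leaves, so it must already be $\tend$, but showing this requires an extra lemma relating $\funprt$ to projection being $\tend$. I expect this participant bookkeeping, together with establishing $\pq\notin\funprt(\Gcx)$, to be the hardest part of the Rocq development.
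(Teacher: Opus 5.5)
Your proposal is correct and follows the grafting technique the paper highlights as one of its main proof methods. It takes the $\pp$-grafting of $\G$ from \cref{lem-grafting} and inducts on the finite g-context $\Gcx$, with these steps: inverting the step shows every leaf is rooted at $\pp\to\pq$; you pick a label $\ell'$ that $\Gamma(\pp)$ actually offers; you fire it in every leaf; and the same induction shows the projections needed for $\Gamma'\assoc\G''$ are preserved. The paper itself gives no written proof of this theorem, only the Rocq link.

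The two points you expect to be hardest are in fact easy. First, $\pq\notin\funprt(\Gcx)$ falls out of that same inversion: every internal node of $\Gcx$ is crossed by \rulegredctx, which forbids $\pq$ at the root. Second, participants that vanish from $\funprt(\G'')$ are handled by the \ruleprojend clause inside the induction on $\Gcx$, so no separate coinduction is needed.
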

\begin{theorem}[Completeness of Association \rocqurl{STLive/lemma/completeness}{L455-L459}] \label{theo-completeness}
  \revised{If $\Gamma \assoc \G$ and $\Gamma \lts{(\pp,\pq)\ell} \Gamma'$,
  then there exists a global type tree $\G'$ such that $\Gamma' \assoc \G'$ and 
  $\G \lts{(\pp,\pq)\ell} \G'$}. 
\end{theorem}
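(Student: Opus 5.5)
The plan is to prove \cref{theo-completeness} by induction on the $\pp$-grafting of $\G$ given by \cref{lem-grafting}. First we invert the context reduction $\Gamma \lts{(\pp,\pq)\ell} \Gamma'$. By the rules of \cref{def-ctx-red}, up to \lstin{M.Equal}, this gives $\Gamma = \Gamma_0, \pp:\T_\pp, \pq:\T_\pq$ and $\Gamma' = \Gamma_0, \pp:\T_\pp^k, \pq:\T_\pq^k$. Here $\T_\pp = \procoutset{\pq}{\ell_i(S_i)}{\T^i_\pp}{i\in I}$ and $\T_\pq = \procinset{\pp}{\ell_j(S'_j)}{\T^j_\pq}{j\in J}$, with $\ell = \ell_k$, $k \in I \cap J$ and $S_k \subso S'_k$. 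Since $\Gamma(\pp)$ is not $\tend$, association forces $\pp \in \funprt(\G)$, and likewise $\pq \in \funprt(\G)$. So $\T_\pp \subtp \projfn{\G}{\pp}$ and $\T_\pq \subtp \projfn{\G}{\pq}$. We then take the $\pp$-grafting $\G = \Gcx[\G_i]_{i}$ from \cref{lem-grafting}, where $\pp \notin \funprt(\Gcx)$. We prove, by induction on $\Gcx$, the statement generalised over all $\G$ grafted by $\Gcx$ and all associated $\Gamma$.

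\emph{Base case} ($\Gcx = \hole_j$). Here $\G$ itself is $\tend$, $\GvtPair{\pp}{\pr}{\dots}$ or $\GvtPair{\pr}{\pp}{\dots}$. The case $\tend$ contradicts $\pp \in \funprt(\G)$. If the root is $\pr \to \pp$, then $\projfn{\G}{\pp}$ is an input type; \rulesubout cannot relate the output type $\T_\pp$ to it, a contradiction. So $\G = \GvtPair{\pp}{\pr}{\ell_i(S''_i).\G_i}_{i\in K}$ and $\projfn{\G}{\pp} = \procoutset{\pr}{\ell_i(S''_i)}{\projfn{\G_i}{\pp}}{i\in K}$. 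By \rulesubout, $\pr = \pq$, $I \subseteq K$ and $\T^k_\pp \subtp \projfn{\G_k}{\pp}$. We take $\G' = \G_k$, which gives $\G \lts{(\pp,\pq)\ell_k} \G_k$ by \rulegredsendrec. For $\pq$, the projection is $\procinset{\pp}{\ell_i(S''_i)}{\projfn{\G_i}{\pq}}{i\in K}$, and \rulesubin gives $\T^k_\pq \subtp \projfn{\G_k}{\pq}$. For any $\ps \notin \{\pp,\pq\}$, \ruleprojcont gives $\projfn{\G_k}{\ps} = \projfn{\G}{\ps}$. We must also show that participants of $\G$ dropping out of $\funprt(\G_k)$ are mapped to $\tend$, which needs a lemma. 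If $\ps \in \funprt(\G)\setminus\funprt(\G_k)$, then $\projfn{\G_k}{\ps} = \tend$, and a subtype of $\tend$ is $\tend$. A symmetric point arises for $\pp,\pq$ themselves when they leave $\funprt(\G_k)$: then their projections are $\tend$, so $\T^k_\pp,\T^k_\pq$ are $\tend$.

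\emph{Inductive case} ($\Gcx = \GvtPair{\pr}{\ps}{\ell_i(S_i).\Gcx_i}$ with $\pp \notin\{\pr,\ps\}$). We first argue that $\pq \notin \{\pr,\ps\}$. If $\pq$ were at the root, its projection would be headed by a communication with $\pr$ or $\ps$, not an input from $\pp$. The exception is a root $\pp\to\pq$, which is excluded since $\pp\notin\funprt(\Gcx)$. By \ruleprojcont, each $\G_i$ has the same projections onto $\pp$ and $\pq$ as $\G$. Each $\G_i$ is projectable and balanced. We check that $\Gamma \assoc \G_i$, using projection of subtrees onto $\pr,\ps$ and the fact that roles for other participants coincide by \ruleprojcont. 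The induction hypothesis then gives $\G'_i$ with $\G_i \lts{(\pp,\pq)\ell}\G'_i$ and $\Gamma'\assoc\G'_i$. We set $\G' = \GvtPair{\pr}{\ps}{\ell_i(S_i).\G'_i}$ and derive the step by \rulegredctx, using $\{\pr,\ps\}\subseteq\funprt(\G_i)$ from balancedness and well-formedness.

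The main obstacle is showing $\Gamma' \assoc \G'$ in this inductive case. The projections of $\G'$ onto $\pr$ and $\ps$ must be assembled from those of the $\G'_i$. We also need $\Gamma'(\pr) = \Gamma(\pr)$ to be a subtype of that reassembled projection. This requires the IH to be strengthened: it should state not only $\Gamma'\assoc \G'_i$ but also that $\projfn{\G'_i}{\pt} = \projfn{\G_i}{\pt}$ for all $\pt\notin\{\pp,\pq\}$. The plan is therefore to prove this strengthened invariant, that a step preserves the projections of non-subjects, as part of the induction. With it, \ruleprojin, \ruleprojout and \ruleprojcont rebuild the projections of $\G'$ coinductively via Paco, and the preservation of $\funprt$ needed for \rulegredctx follows.
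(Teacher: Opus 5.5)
There is a genuine gap in the inductive step. Your inversion and base case are sound, but the step breaks at the claim $\Gamma \assoc \G_i$. For the root sender $\pr$ you have $\Gamma(\pr) \subtp \projfn{\G}{\pr}$, which is a send to $\ps$. But $\projfn{\G_i}{\pr}$ is the continuation \emph{after} that send, so $\Gamma(\pr) \subtp \projfn{\G_i}{\pr}$ fails in general. Concretely, take the balanced, projectable $\G = \GvtPair{\prt{r}}{\prt{s}}{\ell(\tint).\GvtPair{\prt{p}}{\prt{q}}{\ell'(\tint).\tend}}$ with only child $\G_1 = \GvtPair{\prt{p}}{\prt{q}}{\ell'(\tint).\tend}$, and let $\Gamma$ consist of the projections of $\G$. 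Then $\pr \notin \funprt(\G_1)$ but $\Gamma(\pr) \neq \tend$. The conclusion $\Gamma' \assoc \G'_1$ you draw from the IH is also impossible, since $\G'_1 = \tend$ while $\Gamma'(\pr) \neq \tend$.

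Because you already generalised over $\Gamma$, the repair is to apply the IH to $\Gamma_i$, obtained from $\Gamma$ by replacing the entries of $\pr$ and $\ps$ with $\projfn{\G_i}{\pr}$ and $\projfn{\G_i}{\ps}$. This $\Gamma_i$ is associated with $\G_i$: use reflexivity of $\subtp$ for $\pr,\ps$ and \ruleprojcont for everyone else. It still performs the $(\pp,\pq)\ell$ step, because $\pp,\pq \notin \{\pr,\ps\}$. The entries $\Gamma'(\pr)=\Gamma(\pr)$ and $\Gamma'(\ps)=\Gamma(\ps)$ then never pass through the IH. Your invariant $\projfn{\G'_i}{\prt{t}} = \projfn{\G_i}{\prt{t}}$ for $\prt{t} \notin \{\pp,\pq\}$ is exactly what gives $\projfn{\G'}{\pr} = \projfn{\G}{\pr}$ and $\projfn{\G'}{\ps} = \projfn{\G}{\ps}$.

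That invariant is still too weak for $\pp$ and $\pq$ themselves. With plain merge, \ruleprojcont at the root of $\G'$ needs one type $\T$ with $\projfn{\G'_i}{\pp} = \T$ for \emph{every} $i$. The IH only gives $\T^k_\pp \subtp \projfn{\G'_i}{\pp}$ separately for each $i$, so you cannot even show that $\G'$ projects onto $\pp$ (or $\pq$). The invariant must also say that $\projfn{\G'}{\pp}$ and $\projfn{\G'}{\pq}$ are the $\ell$-continuations of $\projfn{\G}{\pp}$ and $\projfn{\G}{\pq}$. This is immediate in the base case by \ruleprojout and \ruleprojin. It is uniform in $i$ in the inductive case, because all $\G_i$ share the projections of $\G$ onto $\pp$ and $\pq$.

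Once you carry this full invariant, association is not needed inside the induction at all. Prove the purely global ``projections after a step'' statement by induction on the $\pp$-grafting. Then derive $\Gamma' \assoc \G'$ from \rulesubout and \rulesubin applied to $\Gamma(\pp) \subtp \projfn{\G}{\pp}$ and $\Gamma(\pq) \subtp \projfn{\G}{\pq}$.

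Finally, your justification of the side condition for \rulegredctx is also wrong. Balancedness does not give $\{\pr,\ps\} \subseteq \funprt(\G_i)$; the same example shows this. If the rule really demanded it, the theorem would fail on that example. The reading under which the theorem holds requires the subjects $\pp,\pq$ of the step to occur in every $\G_i$. That holds here, since $\projfn{\G_i}{\pp} = \projfn{\G}{\pp} \neq \tend$, and likewise for $\pq$.
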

\begin{remark}
  Note that in the statement of soundness we allow the message label for the {\ltc} reduction 
  to be different from the message label for the global type reduction. 
  This is because our use of subtyping in association causes the entries in the {\ltc}
  to be less expressive than the types obtained by projecting the global type. For example consider
    $\Gamma = \pp : \ltsend{q}{\ell_0(\tint).\tend}, \; \pq : \ltrec{p}{\ell_0(\tint).\tend, \ell_1(\tint).\tend}$
    and $\G= \GvtPair{p}{q}{\ell_0(\tint).\tend, \ell_1(\tint).\tend}$.
  We have $\Gamma \assoc \G$ and $\G \lts{\lblsync{p}{q}{\ell_1}}$.
  However $\Gamma \lts{\lblsync{p}{q}{\ell_1}}$ is not a valid transition.
  \revised{But the $\Gamma \lts{\lblsync{p}{q}{\ell_0}}$ transition that involves the same participants is valid, so the 
  {\ltc} can still match the transition of the global type to some extent.} 
\end{remark}

\section{Properties of {\LTC}s} \label{sec-props}
We now use the LTS semantics to define some desirable properties of type {\ctext}s and their 
reduction sequences. Namely, we formulate
safety, fairness and liveness properties based on the definitions in \cite{LessIsMoreRevisited}.\footnote{Whereas in general, "safety" and "liveness" refer to classes of properties, following \cite{LessIsMore} we here refer to specific safety and liveness properties, ones that are particularly relevant for MPST.} 
\subsection{Safety}
We start by defining the \textit{safety} property that plays an important role 
in bottom-up session type systems \cite{LessIsMore}:
\begin{definition}[Safe Local Type {\Ctext}s] \label[definition]{def:safety}
  We define $\safe$ coinductively as the largest set of local type {\Ctext}s such that whenever we have $\Gamma \in \safe$:
  \begin{align}
    \label{saferule:sendrec} \tag*{\rulesafesync} & \quad \lbltrans{\Gamma}{\lblsend{p}{q}{\ell(S)}}{} \; and \; \lbltrans{\Gamma}{\lblrec{q}{p}{\ell'(S')}} \; implies 
    \; \lbltrans{\Gamma}{\lblsync{p}{q}{\ell}}\\
    \label{saferule:reduce} \tag*{\rulesafereduce} & \quad  \lbltrans{\Gamma}{}{\Gamma'} \;implies\; \Gamma' \in \safe
  \end{align}
  We write $\safe(\Gamma)$ if \;$\Gamma \in \safe$.
\end{definition}
\begin{tcolorbox}[colback=blue!10]
Safety says that if $\pp$ and $\pq$ attempt to communicate with each other and $\pp$ requests to 
send a value using message label $\ell$, then $\pq$ should be able to receive that message label.
Furthermore, this property should be preserved under any type {\ctext} reductions. 
\end{tcolorbox}

Being a coinductive property, to show that $\safe(\Gamma)$, it suffices to give a set
$\varphi$ such that $\Gamma \in \varphi$ and $\varphi$ satisfies
\rulesafesync and \rulesafereduce.
This amounts to showing that every element of $\Gamma'$ of the set of reducts of $\Gamma$,
defined
$\varphi := \{\Gamma' \; | \; \Gamma \lts{}^* \Gamma' \}$, satisfies \rulesafesync. 
We illustrate this with some examples:
\begin{example}\label{exam:safe}
   Let $\Gamma=\prt{p}:\procoutsingle{q}{\ell_0(\tint)}{\tend},\prt{q}:\procinsingle{p}{\ell_0(\tnat)}{\tend} $. 
   $\Gamma$ is not safe \rocqurl{STLive/examples/ex_type_semantics}{L532} as we have $\lbltrans{\Gamma}{\lblsend{p}{q}{\ell_0}}{}$ and 
   $\lbltrans{\Gamma}{\lblrec{q}{p}{\ell_0}}{}$ 
   but we do not have $\lbltrans{\Gamma}{\lblsync{p}{q}{\ell_0}}$ as $\tint \nleqslant \tnat $.
   
  Consider $\Gamma$ from Example \ref{exam:reductions}.  All the reducts satisfy \rulesafesync, hence $\Gamma$ is safe \rocqurl{STLive/examples/ex_type_semantics}{L370}.
\end{example}
\vspace{-5pt}
In Rocq, we define $\safe$ coinductively with Paco \rocqurl{STLive/src/path_props}{L144-L149}:

\begin{tcbbr}{Rocq}
Definition weak_safety (c: tctx ) :=
  forall p q s s'  k k', tctxRE (lsend p q (Some s) k) c -> 
  tctxRE (lrecv q p (Some s') k') c -> tctxRE (lcomm p q k) c.
Inductive safe (R: tctx -> Prop): tctx -> Prop :=
  | safety_red :  forall c, weak_safety c -> 
  (forall p q c' k,  tctxR c (lcomm p q k) c' -> exists c'', M.Equal c' c'' /\ R c'') 
    ->  safe R c.
Definition safeC c := paco1 safe bot1 c.
\end{tcbbr} 

In the above, \lstin{weak_safety} corresponds to \!\rulesafesync where \lstin{tctxRE l c} is shorthand for
\lstin{exists c', tctxR c l c'}. 
In the type \lstin{safe}, the constructor \lstin{safety_red} corresponds to 
\rulesafereduce (up to the predicate \lstin{M.Equal}). 
Then \lstin{safeC} is defined as the greatest fixed point of \lstin{safe} 
\revised{and
constructed by the Paco function \lstin{paco1 safe bot1 c}}.

We have that {\ltc}s with associated global types are always safe.
\begin{theorem}[Safety by Association \rocqurl{STLive/lemma/safety}{L44-L46}]
  \revised{If $\Gamma \assoc \G$ then $\safe(\Gamma)$. }
\end{theorem}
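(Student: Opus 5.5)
We prove this by coinduction, taking as candidate invariant the set $\varphi := \{\Gamma \mid \exists \G.\ \Gamma \assoc \G\}$ of all associated environments. In Paco this amounts to \lstin{pcofix} on \lstin{safeC}: for $\Gamma \assoc \G$ we must establish \rulesafesync{} for $\Gamma$ and show that every reduct of $\Gamma$ is again in $\varphi$.

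\rulesafereduce{} follows directly from Completeness of Association (\cref{theo-completeness}). If $\Gamma \lts{(\pp,\pq)\ell} \Gamma'$, completeness yields $\G'$ with $\G \lts{(\pp,\pq)\ell} \G'$ and $\Gamma' \assoc \G'$. So $\Gamma'$, up to \lstin{M.Equal}, lies in $\varphi$, and we close with the coinduction hypothesis. This is where association being preserved under reduction carries the whole coinductive argument.

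The substantive step is \rulesafesync{}. Suppose $\Gamma \lts{\lblsend{p}{q}{\ell(S)}}$ and $\Gamma \lts{\lblrec{q}{p}{\ell'(S')}}$. Then $\Gamma(\pp) = \procoutset{\pq}{\ell_i(S_i)}{\T_i}{i\in I}$ with $\ell = \ell_k$, $S = S_k$, and $\Gamma(\pq) = \procinset{\pp}{\ell'_j(S'_j)}{\T'_j}{j\in J}$. Neither type is $\tend$, so $\pp,\pq \in \funprt(\G)$, and $\Gamma(\pp) \subtp \projfn{\G}{\pp}$ and $\Gamma(\pq) \subtp \projfn{\G}{\pq}$.

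By subtyping inversion, $\projfn{\G}{\pp}$ is a send to $\pq$ whose label set contains $I$, with $S_k \subso S^{\G}_k$. Likewise $\projfn{\G}{\pq}$ is a receive from $\pp$ whose label set is contained in $J$, with $S^{\G}_i \subso S'_i$ on those labels. Now take the $\pp$-grafting of $\G$ from \cref{lem-grafting}, obtaining $\Gcx[\G_i]_{i\in I}$, and argue by induction on $\Gcx$ that $\G$ and its projections have the following structure:
\begin{itemize}
\item every $\G_i$ has root $\pp\to\pq$ with the same label/sort set (because $\projfn{\G}{\pp}$ is a send to $\pq$, every hole is a $\pp\to\pq$ node and never $\tend$ or a node involving some other partner);
\item $\projfn{\G}{\pq}$ is the receive read off those roots.
\end{itemize}
For this we need that $\pq$ does not occur in $\Gcx$. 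If $\pq$ occurred at some node of $\Gcx$, then the projection onto $\pq$ would have a top-level action at a different position, contradicting that $\projfn{\G}{\pq}$ is a receive from $\pp$ with the labels read at the holes. The cleanest way to organise this is an auxiliary lemma, proved by induction on $\Gcx$ using plain merge (\ruleprojcont), stating that the projections onto $\pp$ and $\pq$ coincide with those of each $\G_i$. Hence $\projfn{\G}{\pq}$ offers exactly the labels that $\projfn{\G}{\pp}$ offers. So $\ell_k$, which lies in $I$, is among the labels of $\projfn{\G}{\pq}$ and therefore in $J$, and $S_k \subso S^{\G}_k \subso S'_k$. By transitivity of $\subso$ and rule $\rulename{$\Gamma$-$\oplus$\andsign}$ we conclude $\Gamma \lts{\lblsync{p}{q}{\ell_k}}$.

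The main obstacle is this auxiliary grafting lemma relating the projections onto $\pp$ and $\pq$. In particular, the case where $\pq \in \funprt(\Gcx)$ has to be ruled out, or handled via a $\pq$-grafting as well. My first attempt would be to show, by induction on the $\pp$-grafting context, that the first action of $\pq$ along every path is the matching receive. As a fallback I would use Soundness (\cref{theo-soundness}) applied to a global step of $\G$ on $(\pp,\pq)$, but that only yields some label $\ell'$, so the direct projection analysis seems necessary.
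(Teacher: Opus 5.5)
Your proof is correct and follows the route the paper indicates. You use coinduction with association as the invariant, Completeness (\cref{theo-completeness}) to re-establish association after each reduction for \rulesafereduce, and a $\pp$-grafting/projection analysis for \rulesafesync, since Soundness alone does not pin down the label. The step you leave loose, $\pq\notin\funprt(\Gcx)$, is immediate: take the first node involving $\pq$ on a path through $\Gcx$; its partner is not $\pp$ (as $\pp\notin\funprt(\Gcx)$), and plain merging (\ruleprojcont) propagates its projection up to $\projfn{\G}{\pq}$, contradicting that the latter is a receive from $\pp$, so neither a $\pq$-grafting nor Soundness is needed.
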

\subsection{Fairness and Liveness}
\label{subsec-ltl}
We now focus our attention on fairness and liveness. 
We first restate the definition of fairness and liveness for 
{\ltc} paths from \cite{LessIsMoreRevisited}.
\begin{definition}[Fair, Live Paths] \label[definition]{def-fair-live}
A {\ltc} reduction path (also called an execution or a run) is a possibly infinite sequence of transitions
$\Gamma_0 \lts{\lambda_0} \Gamma_1 \lts{\lambda_1} ..$ such that $\lambda_i$ is a synchronous transition label,
that is, of the form $\lblsync{p}{q}{\ell}$, for all $i$.

We say that a {\ltc} reduction path $\Gamma_0 \xrightarrow{\lambda_0} \Gamma_1 \xrightarrow{\lambda_2} ..$ is \emph{fair} if,
for all valid $n \mathop{\in} \mathbb{N}: 
\Gamma_n \lts{\lblsync{\prt{p}}{\prt{q}}{\ell}}$ implies 
$\exists k,\ell'$ such that $k \ge n$ and 
$\lambda_k = \lblsync{\prt{p}}{\prt{q}}{\ell'}$, and therefore
$\Gamma_k \lts{\lblsync{\prt{p}}{\prt{q}}{\ell'}} \Gamma_{k+1}$.
We say that a path $(\Gamma_n)_{n \in N}$ is \emph{live} iff, $\forall n \in N$:
\begin{enumerate}
  \item $\Gamma_n \lts{\lblsend{\prt{p}}{\prt{q}}{\ell(S)}}$ implies $\exists k,\ell'$ such that $N \ni k \ge n$ and $\Gamma_k \lts{\lblsync{\prt{p}}{\prt{q}}{\ell'}} \Gamma_{k+1}$
  \item $\Gamma_n \lts{\lblrec{\prt{q}}{\prt{p}}{\ell(S)}}$ implies $\exists k,\ell'$ such that $N \ni k \ge n$ and $\Gamma_k \lts{\lblsync{\prt{p}}{\prt{q}}{\ell'}} \Gamma_{k+1}$
\end{enumerate}
\end{definition}
\begin{definition}[Live {\LTC}]\label[definition]{def-live-ctx}
  A {\ltc} $\Gamma$ is live if whenever $\Gamma \rdc^{*} \Gamma'$,
  every fair path starting from $\Gamma'$ 
  is also live.
\end{definition}
\begin{tcolorbox}[colback=blue!10]
Informally, liveness says that every communication request on the path is eventually answered.
With our fairness assumption \cite{fairness}, we focus on "sensible" reduction paths
where every communication that's enabled by both participants is eventually executed.
Live type {\ctext}s are then defined to be the $\Gamma$ such that whenever 
$\Gamma$ can evolve (in possibly multiple steps) into $\Gamma'$, 
all fair paths that start from  $\Gamma'$ are also live.  
\end{tcolorbox}

\begin{example}\label[examples]{exam:live}
  Consider the {\ctext}s $\Gamma, \Gamma'$ and  $\Gamma_\tend$ from 
  Example \ref{exam:reductions}. One possible reduction path is 
  $\Gamma \lts{\lblsync{\prt{p}}{\prt{q}}{\ell_0}} \Gamma \lts{\lblsync{\prt{p}}{\prt{q}}{\ell_0}} \dots$. 
  Denote this path as $(\Gamma_n)_{n \in \mathbb{N}}$, where $\Gamma_n = \Gamma$ for all $n \in \mathbb{N}$. 
We have $\forall n, \Gamma_n \lts{\lblsync{\prt{p}}{\prt{q}}{\ell_0}}$ and 
  $\Gamma_n \lts{\lblsync{\prt{p}}{\prt{q}}{\ell_1}}$ as the only possible synchronised reductions from $\Gamma_n$.
  Accordingly, we also have $\forall n, \Gamma_n \lts{\lblsync{\prt{p}}{\prt{q}}{\ell_0}} \Gamma_{n+1}$ in 
  the path so this path is fair \rocqurl{STLive/examples/ex_type_semantics}{L409}.
  However, this path is not live \rocqurl{STLive/examples/ex_type_semantics}{L430} as we have $\Gamma_1 \lts{\lblrec{\prt{r}}{\prt{q}}{\ell_2(\tint)}}$ but there is no $n, 
  \ell'$ with $\Gamma_n \lts{\lblsync{\prt{q}}{\prt{r}}{\ell'}} \Gamma_{n+1}$ in the path.
  Consequently, $\Gamma$ is not a live type {\ctext}.

  Now consider the reduction path $\Gamma \lts{\lblsync{\prt{p}}{\prt{q}}{\ell_0}} \Gamma 
  \lts{\lblsync{p}{q}{\ell_1}}  \Gamma' \lts{\lblsync{\prt{q}}{\prt{r}}{\ell_2}} 
  \Gamma_\tend$. 
  This path is fair \rocqurl{STLive/examples/ex_type_semantics}{L464} and live \rocqurl{STLive/examples/ex_type_semantics}{L500} as it contains the $\lblsync{\prt{q}}{\prt{r}}{}$ transition 
  from the counterexample above.   
\end{example}
\cref{def-fair-live}, while intuitive, is not really convenient for a Rocq formalisation due to 
\revised{its explicit use of indices to quantify over the contexts in a path. 
Proofs in this setting would require additional bookkeeping to keep track of all the indices used. 
All this extra complexity could be avoided} if these properties could
be expressed as a least or greatest fixed point, which could then be formalised via Rocq's 
inductive or (via Paco) coinductive  types. 
To achieve this, we recast fairness and liveness for {\ltc} paths 
in Linear Temporal Logic (LTL) \cite{pnueli1977temporal}.  
The LTL operators \textit{eventually} ($\lozenge$) and \textit{always} ($\square$) are characterised as 
least and greatest fixed points using their expansion laws \cite[Chapter 5.14]{baier}.
\vspace{2pt} Hence they are implemented in Rocq as the inductive type \lstin{eventually} \rocqurl{STLive/src/path_props}{L74-L76} and the coinductive type 
\lstin{alwaysCG} \rocqurl{STLive/src/path_props}{L93}. We can further represent reduction paths
as \textit{cosequences}, or \textit{streams}. Then the Rocq definition of \cref{def-fair-live} 
amounts to the following \rocqurl{STLive/src/path_props}{L132-L138}:\vspace{5pt}
\begin{tcb}{Rocq}
CoInductive coseq (A: Type): Type :=
  | conil : coseq A
  | cocons: A -> coseq A -> coseq A.
Notation local_path := (coseq (tctx*option label)).
Definition fair_path_local_inner (pt: local_path): Prop :=
  forall p q n, to_path_prop (tctxRE (lcomm p q n)) False pt ->  
  eventually (headComm p q) pt.
Definition fair_path := alwaysCG fair_path_local_inner.
Definition live_path_inner (pt: local_path) : Prop := forall p q s n, 
(to_path_prop (tctxRE (lsend p q (Some s) n)) False pt -> eventually (headComm p q) pt) /\
(to_path_prop (tctxRE (lrecv p q (Some s) n)) False pt -> eventually (headComm q p) pt).
Definition live_path := alwaysCG live_path_inner.
\end{tcb}  

With these definitions we can now prove that {\ltc}s associated with a global type are live,
which is the most involved of the results mechanised in this work.
\begin{remark}
  We once again emphasise that all global types mentioned are assumed to be balanced (\cref{def-balance}).
  Indeed association with non-balanced global types doesn't guarantee liveness. 
  As an example, consider $\Gamma$ from Example \ref{exam:reductions}, which is associated with 
  $\G$ from Example \ref{exam:assoc}. Yet we have shown in Example \ref{exam:live} that $\Gamma$
  is not a live type {\ctext}. This is not surprising as $\G$ is not balanced.  
\end{remark}
\begin{theorem}[Liveness by Association \rocqurl{STLive/lemma/liveness}{L3147-L3150}]\label{theo-ctx-live}
\revised{If $\Gamma \assoc \G$ then $\Gamma$ is live.}  
\end{theorem}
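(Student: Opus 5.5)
We first reduce the statement to a claim about single fair paths. By \cref{theo-completeness}, association is preserved along reductions of $\Gamma$: if $\Gamma \assoc \G$ and $\Gamma \rdc^* \Gamma'$, then $\Gamma' \assoc \G'$ for some $\G'$ with $\G \rdc^* \G'$. Balancedness and projectability are preserved by global type reductions (lemmas inherited from \cite{srpaper}). It therefore suffices to prove the following: if $\Gamma_0 \assoc \G_0$, then every fair path from $\Gamma_0$ is live. Since \lstin{live_path} is \lstin{alwaysCG live_path_inner}, we proceed by Paco coinduction. Completeness gives a step-by-step sequence $\G_n$ with $\Gamma_n \assoc \G_n$ along the path. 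Fairness is an $\square$-property and is therefore inherited by suffixes. Hence we only need to establish \lstin{live_path_inner} at the head of an arbitrary fair path whose head is associated with some balanced $\G$.

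Take the send case: $\Gamma_0 \lts{\lblsend{p}{q}{\ell(S)}}$. The receive case is symmetric. Then $\Gamma_0(\pp)$ is an internal choice towards $\pq$. Since $\Gamma_0(\pp) \subtp \projfn{\G_0}{\pp}$ and projections of non-participants are $\tend$, we get $\pp \in \funprt(\G_0)$. We then apply \cref{lem-grafting} to obtain the $\pp$-grafting $\Gcx[\G_i]_{i\in I}$ of $\G_0$. Because the $\pp$-projection ignores the prefix $\Gcx$, every $\G_i$ has root $\pp \to \pq$; plain merge forces all branches to agree, and subtyping fixes the partner to be $\pq$. We now prove, by induction on $\Gcx$ and generalising over the path, the following claim: \emph{for any fair path from an associated $\Gamma$ whose global type is $\pp$-grafted by $\Gcx$ with all roots of the $\G_i$ equal to $\pp\to\pq$, eventually (\lstin{headComm p q}).}

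For the base case, $\Gcx = \hole_j$, so $\G_0$ has root $\pp\to\pq$. The receiver $\Gamma_0(\pq)$ is a subtype of an external choice from $\pp$ containing all labels. Because $\rulesubout$ only shrinks the sender's label set and $\rulesubin$ only enlarges the receiver's, we obtain $\Gamma_0 \lts{\lblsync{p}{q}{\ell'}}$ for some $\ell'$ (alternatively, via \cref{theo-soundness}). Fairness then yields \lstin{eventually (headComm p q)} directly.

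For the inductive case, $\Gcx$ has root $\pr\to\ps$ with $\pp \notin \{\pr,\ps\}$. If the first step of the path is a $\pq$-$\pp$ communication, we are done. Otherwise the path must take some step. To see this, soundness gives a possible step with root $(\pr,\ps)$, and fairness forbids a path that stops while a step is enabled; here we rely on our convention on \lstin{conil}. Every step $\lblsync{a}{b}{\ell}$ with $\{a,b\}$ not equal to $\{\pp,\pq\}$ corresponds, by completeness, to a global step $\G_0 \lts{} \G_1$. There are two subcases. If $\{a,b\}=\{\pr,\ps\}$, the step is a \rulegredsendrec step into a child, whose $\pp$-grafting is a strict subcontext, so the induction hypothesis applies. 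If instead $\{a,b\}$ is disjoint from $\{\pr,\ps\}$, it is a \rulegredctx step, and the new tree is grafted by a context of the same shape but with an $\alpha$-reduced interior. In that subcase the induction measure does not decrease.

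The main obstacle is exactly this second subcase: steps that commute past the root of $\Gcx$ without consuming it. My plan is a nested induction. The outer induction is on $\Gcx$. The inner induction is on an \lstin{eventually} obtained from fairness for the root pair $(\pr,\ps)$, which is continuously enabled until taken, since context steps preserve the root by \rulegredctx and association keeps it enabled. Thus within finitely many steps either the $\pr\to\ps$ root is consumed, decreasing $\Gcx$, or a $\pp$-$\pq$ step occurs. Making this work requires a lemma that \rulegredctx steps on $\Gcx[\G_i]$ yield a grafting $\Gcx'[\G'_i]$ in which $\Gcx'$ has the same shape as $\Gcx$ (at least the same root and the same depth), with the $\G'_i$ still rooted at $\pp\to\pq$ or reduced compatibly. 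I expect proving this shape-preservation lemma, together with the bookkeeping needed to keep the enabled root pair under association, to be the bulk of the mechanisation.
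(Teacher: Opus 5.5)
Your argument is sound in outline but takes a different route from the paper. The paper splits the proof in two. First it proves liveness only for $\Gamma_{\texttt{proj}}$, the environment of exact projections of $\G$ (in Rocq, an equivalent enabledness predicate on $\G$), by induction on the height of the $\pp$-grafting: this height is bounded below by that of the $\pq$-grafting and decreases until the two graftings coincide, at which point $(\pp,\pq)$ is enabled and fairness fires it. It then transfers liveness to an arbitrary associated $\Gamma$ by lifting association to paths. For every path from $\Gamma$ it builds, via \texttt{constructive\_indefinite\_description} and a \texttt{CoFixpoint}, a path from $\Gamma_{\texttt{proj}}$ with pointwise supertypes and identical labels.

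You instead stay on the given fair path of $\Gamma$. You use \cref{theo-completeness} to get an associated global type at the current position and \cref{theo-soundness} to get enabledness of the current root pair. You then do a nested induction: outer on the $\pp$-grafting, inner on the \texttt{eventually} that fairness supplies for $(\pr,\ps)$. Because you only need the existence of an associated type at each position, never a cosequence of them, you avoid the path-simulation construction and the choice axiom, and you need no comparison with the $\pq$-grafting. The price is a coarser base case: you wait for $\pp\to\pq$ to reach the root rather than for $\pq$ to become ready. You also thread association through every step. The paper instead keeps its combinatorial core on exact projections, where enabledness can be read off $\G$, and isolates the subtyping gap in a reusable path-association lemma.

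One correction is needed: your shape-preservation lemma is false as stated. A \rulegredctx step whose redex lies inside $\Gcx$ contracts those interior nodes. The new context keeps the root $\pr\to\ps$ but can be strictly shallower, so you only get height non-increase. Hence the outer induction must be strong induction on the height of $\Gcx$, which is exactly the paper's invariant, rather than structural induction on $\Gcx$. After several context steps, the child you enter is a child of the modified context, not a subterm of $\Gcx$.
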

\begin{proof} \revised{
(Outline) Our proof proceeds in two steps. 
First, we prove that the type {\ctext} obtained by direct projections%
\footnote{The actual Rocq proof defines an equivalent "enabledness" predicate on 
global types instead of working with direct projections. 
The outline given here is a slightly simplified presentation.} of $\G$,
that is, $\Gamma_{\texttt{proj}} = \{ \pp_i : \G \upharpoonright \pp_i \SEP \pp_i \in \funprt(\G)\}$, 
is live \rocqurl{STLive/lemma/liveness}{L3080}. 
We then leverage \cref{theo-soundness} and
\cref{theo-completeness} to show that if $\Gamma_{\texttt{proj}}$ is live, so is $\Gamma$ \rocqurl{STLive/lemma/path_assoc}{L548-L551}.
}

\revised{
Suppose  $\Gamma_{\texttt{proj}} \lts{\lblsend{p}{q}{\ell(S)}}$ (the case for the receive is similar and omitted), 
and $\rho$ is a fair {\ltc} reduction path beginning with $\Gamma_{\texttt{proj}}$.
To show that $\rho$ is live we need to show the 
existence of a  $\lblsync{p}{q}{\ell}$ transition in $\rho$. 
We achieve this by taking the height of the $\pp$-grafting of the 
global type associated with the head of $\rho$ as our induction invariant.
We show that this invariant is bounded from below by the height of the $\pq$-grafting,
and that it keeps decreasing (\rocqurl{STLive/lemma/liveness}{L610-L619}, \rocqurl{STLive/lemma/liveness}{L1765-L1778}) 
until the $\pp$-graftings and $\pq$-graftings become equal, at which point a $\lblsync{p}{q}{\ell}$ 
transition is enabled on the path \rocqurl{STLive/lemma/multigrafting}{L109-L114}.
Our fairness assumption then forces that transition to fire \rocqurl{STLive/lemma/liveness}{L2085-L2089} on $\rho$. 
}

\revised{
In the second step of the proof, we extend association on to paths \rocqurl{STLive/lemma/path_assoc}{L66-L76} to obtain, 
for each {\ltc} reduction path 
$\sigma$ that begins with $\Gamma$, another {\ltc} reduction path $\sigma'$ beginning with $\Gamma_{\texttt{proj}}$
such that the elements of $\sigma$ are subtypes (subtyping on {\ctext}s defined pointwise) 
of the corresponding elements of $\sigma'$, 
and the corresponding transitions of $\sigma$ and $\sigma'$ have the same labels. \rocqurl{STLive/lemma/path_assoc}{L500-L503}. 
This is obtained from \cref{theo-completeness}; however, in Rocq, the statement of
\cref{theo-completeness} is implemented as an \lstin{exists} statement
that lives in \lstin{Prop}, hence we need to use the \lstin{constructive_indefinite_description} axiom
to construct a \lstin{CoFixpoint} returning the desired cosequence $\sigma'$ \rocqurl{STLive/lemma/path_assoc}{L365-L385}.
Then \cref{theo-soundness} and \cref{theo-completeness} are used to show that the liveness of 
$\Gamma_{\texttt{proj}}$ implies the liveness of $\Gamma$, completing the proof.
}
\end{proof}

\section{Properties of Multiparty Sessions}\label{sec-proc-props}

We define typing rules for the session calculus introduced in \cref{sec-procs}, and prove subject 
reduction and deadlock freedom for them. Then we define 
a liveness property for sessions, and show that processes typable by a {\ltc} 
that's associated with a global type tree are guaranteed to satisfy this liveness property.
\subsection{Typing rules}
We give typing rules for our session calculus based on \cite{SynchronousSubtyping} and \cite{srpaper}. 
We have two kinds of typing judgements and type {\ctext}s.
$\Theta \vdashp \PT : \T$ says that the single process $\PT$
can be typed with local type $\T$ using expression and type variables from $\Theta$.
On the other hand, $\Gamma \vdashm \M$ expresses that session $\M$ 
can be typed by the {\ltc} (\cref{def-type-ctx}).
Typing rules for expressions are standard and can be found in e.g. \cite{SynchronousSubtyping}, and are therefore omitted.
\begin{table}[h]
{\footnotesize
\[
\begin{array}{@{}l@{}}
  \inferrule[\rulename{t-end}]
  {}
  {\Theta \vdashp \inact \colon \tend}
\quad
  \inferrule[\rulename{t-var}]
  {}
  {\Theta,\Xv\colon\T \vdashp \Xv\colon\T}
 \quad
  \inferrule[\rulename{t-rec}]
  {\Theta,\Xv\colon\T \vdashp \PT\colon \T}
  {\Theta \vdashp \mu\Xv.\PT\colon\T}
 \quad
  \inferrule[\rulename{t-if}]
  {\Theta\vdashp e\colon \texttt{bool} \quad \Theta \vdashp \PT_1\colon \T  \quad \Theta \vdashp \PT_2\colon \T}
  {\Theta \vdashp \texttt{if}\ \kf{e} \ \mathtt{then}\ \PT_1  \ \mathtt{else} \ \PT_2\colon \T}
      \\[4mm]
  \inferrule[\rulename{t-sub}]
  {\Theta \vdashp \PT\colon \T \quad \T\leqslant \T'}
  {\Theta \vdashp \PT\colon \T'}
  \quad
   \inferrule[\rulename{t-in}]
  {\forall i\in I,\quad \Theta,x_i\colon\ST_i \vdashp \PT_i\colon \T_i}
  {\Theta \vdashp \sum_{i \in I} \prt{p} ?\ell_i(x_i).\PT_i\colon  
  \ltrec{\pp}{\ell_i(\ST_i).\T_i}_{i \in I}}
 \quad
   \inferrule[\rulename{t-out}]
  {\Theta \vdashp e\colon\ST \quad \Theta \vdashp \PT\colon \T}
  {\Theta \vdashp \prt{p}!\ell(\kf{e}).\PT \ \colon
    \ {\pp}\sendsign\{{\ell(\ST).\T}\}}\\[5mm]
\inferrule[\rulename{t-sess}]{\forall i \in I: \qquad \vdashp \PT_i \colon \Gamma(\pp_i) \qquad 
\Gamma \assoc \G}
{\Gamma \vdashm \prod_i \pp_i \triangleleft \PT_i}
\end{array}
\]}
\caption{The typing rules for processes and multiparty sessions}
\label{tbl:proc}\vspace{-10pt}
\end{table}

\cref{tbl:proc} states the standard \cite{srpaper,SynchronousSubtyping} typing rules for processes, 
which we do not elaborate on.
The main rule for typing multiparty sessions is \rulename{t-sess}:  
it states that a session made of the parallel composition of processes $\prod_i \pp_i \triangleleft \PT_i$ 
can be typed by an associated local {\ctext} $\Gamma$ if the local type of participant $\pp_i$ in $\Gamma$
types the process $\PT_i$. This is expressed in Rocq with the predicate \lstin{typ_sess : session -> tctx -> Prop} \rocqurl{STLive/src/session}{L632-L638}.
\subsection{Properties of Typed Sessions}

We can now prove some properties of typed sessions. The following theorems relating
session reductions to types underlie our results.
\begin{lemma}[Typing after Unfolding \rocqurl{STLive/lemma/subj_red_helpers}{L105}]
  \label{lem-typ-unfold}
  \revised{If $\Gamma \vdashm \M$ and $\M \Rrightarrow \M'$ then $\Gamma \vdashm \M'$.} \pagebreak[3]
\end{lemma}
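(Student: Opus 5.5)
We argue by induction on the derivation of $\M \Rrightarrow \M'$, i.e.\ on the rules \rulename{Unf-trans}, \rulename{Unf-struct}, \rulename{Unf-rec}, \rulename{Unf-condt} and \rulename{Unf-condf} of \cref{tbl:unf}. The transitive case is immediate from the two induction hypotheses. Throughout, the {\ltc} $\Gamma$ and the associated global type $\G$ stay fixed, so the side condition $\Gamma \assoc \G$ of \rulename{t-sess} is inherited unchanged. It therefore suffices to show two things: that the multiset of participants of the session is preserved, and that each participant's process is still typed by $\Gamma(\pp_i)$.

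For \rulename{Unf-struct} we first establish that $\Gamma \vdashm \M$ is invariant under $\equiv$. The rule \rulename{t-sess} only looks at the set of pairs $\pp_i \triangleleft \PT_i$ occurring in the flattened session. The rules sc-sym, sc-assoc and sc-o preserve this set, with duplicates and $\mathcal{O}$ handled consistently. In Rocq we will prove this by induction on the congruence derivation. The most convenient route is a lemma stating that \lstin{typ_sess M G} is equivalent to a statement about the list of (participant, process) pairs extracted from \lstin{M}, with each of the structural rules shown to permute that list.

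For the remaining rules, the session has the shape $\pp \triangleleft \PT \mid \N$, and only $\pp$'s component changes. Since $\N$ is untouched, it suffices to show $\vdashp \PT' : \Gamma(\pp)$ from $\vdashp \PT : \Gamma(\pp)$.

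\begin{itemize}
\item For the conditional rules, we invert the typing derivation. Because of \rulename{t-sub}, the inversion lemma reads: if $\vdashp \texttt{if}\ e\ \mathtt{then}\ \PT_1\ \mathtt{else}\ \PT_2 : \T$, then there is $\T'' \subtp \T$ with $\vdashp \PT_1 : \T''$ and $\vdashp \PT_2 : \T''$. We obtain it by induction on the derivation, composing subtypings with transitivity of $\subtp$ (available from \cite{srpaper}). One more application of \rulename{t-sub} then gives the required judgement for either branch.
\item For \rulename{Unf-rec}, inversion similarly yields $\Xv : \T'' \vdashp \PT : \T''$ with $\T'' \subtp \Gamma(\pp)$. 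We then need a \emph{substitution lemma}: if $\Theta, \Xv : \T'' \vdashp \PT : \T$ and $\Theta \vdashp \QT : \T''$, then $\Theta \vdashp \PT[\QT/\Xv] : \T$. Instantiating it with $\QT = \mu\Xv.\PT$, which is typed by \rulename{t-rec}, followed by \rulename{t-sub}, closes this case.
\end{itemize}

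We expect the main obstacle to be this substitution lemma together with the inversion lemmas under subsumption, in the de Bruijn setting of \lstin{p_var}/\lstin{p_rec}. Substituting under binders requires a weakening/shifting lemma for both type variables and expression variables, since receive branches bind $x_i$. We will first check whether \cite{srpaper} already provides these for their subject reduction proof, since its process typing is the same, and reuse them. The session-level bookkeeping modulo \lstin{M.Equal} and $\equiv$ should be routine but tedious.
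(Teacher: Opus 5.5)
Your proposal is correct and is essentially the argument behind the cited Rocq lemma; the paper itself gives no prose proof. The argument is induction on the unfolding derivation, invariance of \rulename{t-sess} under $\equiv$ via the flattened participant--process list, and, for the conditional and recursion cases, inversion under \rulename{t-sub} plus the substitution lemma reused from \cite{srpaper}. A small simplification: you can invert directly to $\vdashp \PT_1 : \T$ (resp.\ $\vdashp \PT[\mu\Xv.\PT/\Xv] : \T$) by induction on the typing derivation, re-applying \rulename{t-sub} in the subsumption case, so transitivity of $\subtp$ is not needed.
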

\begin{theorem}[Subject Reduction \rocqurl{STLive/lemma/subj_red_prog_fid}{L297}]\label{theo-sub-red}
\revised{
If $\Gamma \vdashm \M$ and $\M \lts{\lblsync{p}{q}{\ell}} \M'$, then 
there exists a type {\ctext} $\Gamma'$ such that 
$\Gamma \lts{\lblsync{p}{q}{\ell}} \Gamma'$ 
and $\Gamma' \vdashm \M'$ .
}
\end{theorem}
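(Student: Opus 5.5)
The proof goes by induction on the derivation of $\M \lts{\lblsync{p}{q}{\ell}} \M'$. Only two rules can conclude such a transition: \rulename{R-comm} and \rulename{R-unfold}.

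\textbf{Case \rulename{R-unfold}.} Here $\M \Rrightarrow \M_1$, $\M_1 \lts{\lblsync{p}{q}{\ell}} \N_1$ and $\N_1 \Rrightarrow \M'$. By \cref{lem-typ-unfold} we get $\Gamma \vdashm \M_1$. The induction hypothesis then gives $\Gamma'$ with $\Gamma \lts{\lblsync{p}{q}{\ell}} \Gamma'$ and $\Gamma' \vdashm \N_1$. A second application of \cref{lem-typ-unfold} gives $\Gamma' \vdashm \M'$. This case is routine.

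\textbf{Case \rulename{R-comm}.} Here $\M = \pq \triangleleft \sum_{i\in I}\pp?\ell_i(x_i).\PT_i \mid \pp \triangleleft \pq!\ell_j(e).\QT \mid \N$ (roles named to match the label). Inverting \rulename{t-sess} gives $\Gamma \assoc \G$ and $\vdashp \PT_r : \Gamma(r)$ for each participant $r$.

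We then invert process typing, pushing the uses of \rulename{t-sub} through by transitivity of $\subtp$. This yields:
\begin{itemize}
\item $\procoutset{\pq}{\ell_j(S)}{\T_\QT}{} \subtp \Gamma(\pp)$ with $\vdash e : S$ and $\vdashp \QT : \T_\QT$;
\item $\ltrec{\pp}{\ell_i(S_i).\T_i}_{i\in I} \subtp \Gamma(\pq)$ with $x_i : S_i \vdashp \PT_i : \T_i$.
\end{itemize}

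From \rulename{Sub-out} and \rulename{Sub-in} we learn three things:
\begin{itemize}
\item $\Gamma(\pp)$ is a send to $\pq$ offering $\ell_j$, with sort $S'$ satisfying $S \subso S'$ and continuation $\T'$ with $\T_\QT \subtp \T'$;
\item $\Gamma(\pq)$ is a receive from $\pp$ whose labels are a subset of $I$;
\item each continuation in $\Gamma(\pq)$ is a supertype of the corresponding $\T_i$, and each sort in $\Gamma(\pq)$ is a subsort of the corresponding $S_i$.
\end{itemize}
So $\Gamma \lts{\lblsend{p}{q}{\ell_j(S')}}$, and $\Gamma \lts{\lblrec{q}{p}{\ell_k(S''_k)}}$ for some label $\ell_k$ of $\Gamma(\pq)$. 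The statement is stated directly on typed sessions, so safety has to be invoked to match $\ell_j$: the previous theorem gives $\safe(\Gamma)$ from $\Gamma \assoc \G$, and \rulesafesync then yields $\Gamma \lts{\lblsync{p}{q}{\ell_j}} \Gamma'$. In particular $\ell_j$ is among the receiver's labels, with sort $S''$ satisfying $S' \subso S''$, and $\Gamma'$ updates $\pp$ to $\T'$ and $\pq$ to the $\ell_j$-continuation $\T''$.

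Next, \cref{theo-completeness} provides $\G'$ with $\G \lts{} \G'$ and $\Gamma' \assoc \G'$, which discharges the association premise of \rulename{t-sess} for $\M'$. For the process premises:
\begin{itemize}
\item for $\pp$, $\vdashp \QT : \T_\QT$ and $\T_\QT \subtp \T'$ give $\vdashp \QT : \T'$ by \rulename{t-sub};
\item for $\pq$, a substitution lemma is needed: $e \downarrow v$ with $\vdash e : S$ gives $\vdash v : S$. Then $S \subso S' \subso S'' \subso S_j$ by contravariance, and transitivity of $\subso$ together with subsumption for values gives $\vdash v : S_j$. Hence $\vdashp \PT_j[v/x_j] : \T_j \subtp \T''$;
\item all other participants in $\N$ are unchanged.
\end{itemize}

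The main obstacle is this \rulename{R-comm} case, specifically two points. First, inverting process typing modulo \rulename{t-sub}; I would prove a generation lemma by induction on typing derivations that collapses chains of subsumption. Second, the value substitution lemma with subsorting. On the Rocq side, the \lstin{M.Equal} bookkeeping for $\Gamma'$ and rearranging $\M$ up to $\equiv$ should be routine but tedious. I would try to reuse the existing substitution and inversion lemmas from \cite{srpaper} for the single-process parts.
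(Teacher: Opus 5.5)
Your proposal is correct. It follows the route the paper's development implies (see the design overview figure; the paper gives no prose proof of this theorem). You argue by induction on the session transition: \cref{lem-typ-unfold} dispatches \rulename{R-unfold}. For \rulename{R-comm`}, you invert process typing modulo subsumption, use safety of the associated environment to ensure that $\Gamma(\pq)$ still offers $\ell_j$, which subtyping alone does not guarantee, and apply \cref{theo-completeness} to obtain $\G'$ with $\Gamma' \assoc \G'$.
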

\begin{theorem}[Session Fidelity \rocqurl{STLive/lemma/subj_red_prog_fid}{L632-L634}]\label{theo-sess-fid}
\revised{
If $\Gamma \vdashm \M$ and $\Gamma \lts{\lblsync{p}{q}{\ell}} \Gamma'$,
then there exists a
message label
$\ell'$, a {\ctext} $\Gamma''$ and a session $\M'$ such that $\M \lts{\lblsync{p}{q}{\ell'}} \M'$, 
$\Gamma \lts{\lblsync{p}{q}{\ell'}} \Gamma''$ and $\Gamma'' \vdashm \M'$. 
}
\end{theorem}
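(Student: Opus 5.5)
We prove Session Fidelity by combining \cref{theo-soundness}, \cref{theo-completeness} and \cref{theo-sub-red}, together with an inversion analysis of the process typing rules.

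\textbf{Step 1: inversion.} From $\Gamma \vdashm \M$, rule \rulename{t-sess} gives $\M \equiv \prod_i \pp_i \triangleleft \PT_i$ with $\vdashp \PT_i \colon \Gamma(\pp_i)$ for all $i$, and $\Gamma \assoc \G$ for some balanced $\G$. From $\Gamma \lts{\lblsync{p}{q}{\ell}} \Gamma'$ we invert \rulename{$\Gamma$-$\oplus$\andsign} and obtain the following shapes:
\[\Gamma(\pp) = \procoutset{\pq}{\ell_i(S_i)}{\T_i}{i\in I}, \quad \Gamma(\pq)=\procinset{\pp}{\ell_j(S'_j)}{\T'_j}{j\in J},\]
with $\ell \in I\cap J$.

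\textbf{Step 2: unfold the two processes.} The processes $\PT_p$ and $\PT_q$ may be conditionals or recursions. I will prove an auxiliary lemma by induction on the typing derivation, which is finite since recursion is guarded. It states that if $\vdashp \PT : \T$ and $\T$ is a send (resp. receive) type, then $\pp\triangleleft \PT \mid \N \Rrightarrow \pp \triangleleft \PT' \mid \N$, where $\PT'$ is a prefix process $\pq!\ell'(\kf{e}).\QT$ (resp. a sum of receives).

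The typing must be inverted through \rulename{t-sub}. For a send, \rulesubout yields a type $\pq\sendsign\{\ell'(S).\T''\}$ with $\ell' \in I$, but $\ell'$ need not equal $\ell$. This is exactly why the statement only promises some $\ell'$. For a receive, \rulesubin shows that the process offers a superset of $J$'s labels with contravariant sorts.

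Expression evaluation is handled as follows. Unfolding conditionals requires $e\downarrow v$, which needs a standard progress/normalisation fact for closed well-typed expressions. The nondeterministic $\sendsign$ is harmless here because some value always exists. \Cref{lem-typ-unfold} keeps the typing after these unfoldings.

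\textbf{Step 3: obtain the process reduction.} Since $\ell'\in I$, the type $\Gamma(\pp)$ can emit $\lblsend{p}{q}{\ell'(S_{\ell'})}$, and $\Gamma(\pq)$ can receive some label. By \cref{theo-safety-assoc}-style reasoning (Safety by Association, i.e.\ \rulesafesync), $\Gamma \lts{\lblsync{p}{q}{\ell'}}$ holds. In particular $\ell' \in J$ and the sorts satisfy $S_{\ell'}\subso S'_{\ell'}$. Because $\pq$'s process offers all labels in $J$, rule \rulename{R-comm} fires on the unfolded session. Closing with \rulename{R-unfold} gives $\M \lts{\lblsync{p}{q}{\ell'}} \M'$.

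\textbf{Step 4: type the result.} \Cref{theo-sub-red} applied to $\M \lts{\lblsync{p}{q}{\ell'}}\M'$ yields $\Gamma''$ with $\Gamma \lts{\lblsync{p}{q}{\ell'}} \Gamma''$ and $\Gamma''\vdashm\M'$, which concludes the proof.

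The main obstacle is Step 2. The inversion lemma must thread \rulename{t-sub} through \rulename{t-rec} and \rulename{t-if} for a coinductive subtyping relation, and it must also show that the value substituted for the received variable is well typed under subsorting. In the mechanisation I would first state the inversion lemma for the process judgment with an explicit subtype witness ("there is $\T_0 \subtp \T$ typing the unfolded prefix") and then take the projection of $\T_0$.
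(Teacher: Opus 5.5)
Your proof is correct and is essentially the route the paper has in mind when it calls fidelity the converse of \cref{theo-sub-red}. You unfold $\pp$'s and $\pq$'s processes, keeping the typing by \cref{lem-typ-unfold}. You then use Safety by Association to show that the label $\ell'$ actually chosen by $\pp$'s process is accepted by $\pq$'s, which is precisely why $\ell'$ may differ from $\ell$. Finally you fire \rulename{R-comm} under \rulename{R-unfold} and let \cref{theo-sub-red} supply $\Gamma''$.

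Two small corrections. Despite your opening sentence, \cref{theo-soundness} and \cref{theo-completeness} play no role in the argument. And in the unfolding lemma, guardedness is not what makes the typing derivation finite (it always is finite); its role is to exclude a top-level \rulename{t-var} beneath \rulename{t-rec}.
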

\begin{tcolorbox}[colback=blue!10]
\cref{lem-typ-unfold} says that typing is preserved after unfolding. 
\cref{theo-sub-red} shows that the type {\ctext} reduces along with the session it types.
\cref{theo-sess-fid} is an analogue of \cref{theo-sub-red} in the opposite direction. 
As in \cref{theo-soundness}, we allow the labels $\ell$ and $\ell'$ to be different in \cref{theo-sess-fid}.
\end{tcolorbox}
\begin{remark}\label{remark-reactive-justif}
Note that in \cref{theo-sub-red} one transition between sessions corresponds to exactly one transition
between {\ltc}s with the same label. That is, every session transition is observed by the corresponding type.
This is the main reason for our choice of 
reactive semantics (\cref{def-sess-semantics}) as $\tau$ transitions are not observed by the type in
ordinary semantics. In other words, with $\tau$-semantics the typing relation is a \textit{weak simulation} \cite{weakbisim},
while it turns into a strong simulation with reactive semantics. For our Rocq implementation 
working with the strong simulation turns out be more convenient 
\revised{as the mechanisation of weak simulation
tends to be not so straightforward \cite{stuttering_for_free,chappe_2026_family_of_sims}}. 
\end{remark}

Now we can prove two of our main results, communication safety and deadlock freedom:
\begin{theorem}[Communication Safety
    \rocqurl{STLive/lemma/live_proc}{L1285-L1288}]\label{theo-type-safe}
\revised{
Assume $\Gamma \vdashm \M$ and $\M \lts{}^* \M'$.\\
If $\M'\Rrightarrow  
(\pp \triangleleft \tout{\pq}{\ell_i}{e}.\PT  \SEP   \pq \triangleleft \pp?\{\ell_j(x_j).\QT_j\}_{j \in J}  \SEP  M'')$,
then $i \in J$.}
  
\end{theorem}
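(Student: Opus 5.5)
First I will propagate typability along the reduction sequence. By induction on the length of $\M \lts{}^* \M'$, each step $\M_k \lts{\lblsync{r}{s}{\ell}} \M_{k+1}$ is handled by Subject Reduction (\cref{theo-sub-red}), which yields $\Gamma_k \lts{\lblsync{r}{s}{\ell}} \Gamma_{k+1}$ with $\Gamma_{k+1} \vdashm \M_{k+1}$. This gives a context $\Gamma'$ with $\Gamma \lts{}^* \Gamma'$ and $\Gamma' \vdashm \M'$. Next, Typing after Unfolding (\cref{lem-typ-unfold}) applied to $\M' \Rrightarrow \N$, where $\N = \pp \triangleleft \tout{\pq}{\ell_i}{e}.\PT \mid \pq \triangleleft \pp?\{\ell_j(x_j).\QT_j\}_{j\in J} \mid M''$, gives $\Gamma' \vdashm \N$.

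From $\Gamma' \vdashm \N$, rule \rulename{t-sess} provides $\vdashp \tout{\pq}{\ell_i}{e}.\PT : \Gamma'(\pp)$, $\vdashp \pp?\{\ell_j(x_j).\QT_j\}_{j\in J} : \Gamma'(\pq)$, and $\Gamma' \assoc \G'$ for some balanced $\G'$. I then prove inversion lemmas for process typing, by induction on the typing derivation, collapsing the chains of \rulename{t-sub} by transitivity of $\subtp$. These lemmas should show that $\Gamma'(\pp) = \procoutset{\pq}{\ell_k(S_k)}{\T_k}{k \in K}$ with $i \in K$ and $e : S'$ for some $S' \subso S_i$, and that $\Gamma'(\pq) = \procinset{\pp}{\ell_k(S'_k)}{\T'_k}{k\in K'}$ with $K' \subseteq J$. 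By \rulesubout, the supertype of the send singleton can only add labels. By \rulesubin, the supertype of the receive can only drop labels.

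Rules \rulename{$\Gamma$-$\oplus$} and \rulename{$\Gamma$-\andsign} then give $\Gamma' \lts{\lblsend{p}{q}{\ell_i(S_i)}}$ and $\Gamma' \lts{\lblrec{q}{p}{\ell_k(S'_k)}}$ for any $k \in K'$; $K'$ is non-empty by \lstin{tctx_wf}. Since $\Gamma' \assoc \G'$, the Safety by Association theorem gives $\safe(\Gamma')$, and \rulesafesync yields $\Gamma' \lts{\lblsync{p}{q}{\ell_i}}$. Inverting this synchronised transition, the receive component must be $\lblrec{q}{p}{\ell_i(S)}$ from $\Gamma'(\pq)$. Hence $i \in K' \subseteq J$.

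An alternative to the final step is to use $\Gamma \lts{}^* \Gamma'$ together with $\safe(\Gamma)$ and \rulesafereduce, without re-deriving association. Safety by Association on $\Gamma'$ directly is cleaner, though, since \rulename{t-sess} already carries an associated global type.

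I expect the main obstacle to be the inversion lemmas. Process typing is not syntax-directed because of \rulename{t-sub}, and rule \rulename{t-rec} interacts with it. Since $\N$ is already unfolded, however, the top-level constructor is a send or a receive, so only \rulename{t-sub} needs to be peeled off, by induction on the derivation. A further technicality is inverting \lstin{tctxR} on the synchronised label while working modulo \lstin{M.Equal} and the \lstin{Rstruct} constructor. For this I would first prove a lemma that any transition labelled $\lblsync{p}{q}{\ell}$ decomposes into a send by $\pp$ and a receive by $\pq$ on their respective entries.
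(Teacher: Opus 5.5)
Your proposal is correct and follows the route the paper sets up for this result. You lift typability along $\M \lts{}^* \M'$ with Subject Reduction (\cref{theo-sub-red}) and across the unfolding with \cref{lem-typ-unfold}, invert the process typing through \rulename{t-sub} to get the shapes of $\Gamma'(\pp)$ and $\Gamma'(\pq)$ with the label inclusions from \rulesubout{} and \rulesubin, and then conclude with Safety by Association and \rulesafesync. The technicalities you flag (peeling off \rulename{t-sub}, non-emptiness via \lstin{tctx_wf}, and inverting \lstin{tctxR} modulo \lstin{Rstruct}) are routine bookkeeping, not gaps.
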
 
\begin{tcolorbox}[colback=blue!10]
\cref{theo-type-safe} means that typed sessions evolve to sessions where if participant \lstin{p}
wants to send to \lstin{q} with label $\ell_i$, and \lstin{q} is listening to receive from \lstin{p},
then \lstin{q} is able to receive with label $\ell_i$.
\end{tcolorbox}
\begin{theorem}[Deadlock Freedom \rocqurl{STLive/lemma/subj_red_prog_fid}{L470}] \label{theo-progress}
\revised{If $\Gamma \vdashm \M$ , one of the following hold :
\begin{enumerate}
  \item Whenever $\M \Rrightarrow \M'$, $\M' \Rrightarrow \M_{\texttt{inact}}$ where every process making up $\M_{\texttt{inact}}$
  is inactive, i.e. 
  $\M_{\texttt{inact}} \equiv \prod_{i=1}^{n}\pp_i \triangleleft \bf{0}$ for some $n$. 
  \item Or there exists $\M'$ such that  $\M \lts{} \M'$.
\end{enumerate}
}
\end{theorem}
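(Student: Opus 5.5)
We prove \cref{theo-progress} by a case split on the global type $\G$ that \rulename{t-sess} associates with $\Gamma$. Since $\Gamma \vdashm \M$ is only derivable by \rulename{t-sess}, we have $\M \equiv \prod_i \pp_i \triangleleft \PT_i$, $\vdashp \PT_i \colon \Gamma(\pp_i)$ for all $i$, and $\Gamma \assoc \G$.

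\textbf{Case $\G = \tend$.} Then $\funprt(\G) = \emptyset$, so association gives $\Gamma(\pp_i) = \tend$ for every $\pp_i \in \dom{\Gamma}$. We show that the first disjunct holds. Take any $\M \Rrightarrow \M'$. By \cref{lem-typ-unfold}, $\Gamma \vdashm \M'$, so every process in $\M'$ is still typed by $\tend$. We then need an auxiliary lemma: if $\vdashp \PT \colon \tend$, then $\PT$ unfolds to $\inact$.

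To prove it, we induct on the typing derivation, using guardedness of recursion to bound the number of $\mu$-unfoldings before a prefix or $\inact$ is reached. \rulename{t-in} and \rulename{t-out} give non-$\tend$ types. \rulename{t-sub} preserves $\tend$, because $\T \subtp \tend$ forces $\T = \tend$. \rulename{t-if} is discharged by evaluating the closed boolean guard, which terminates, and applying \rulename{Unf-condt} or \rulename{Unf-condf}. \rulename{t-rec} is discharged with \rulename{Unf-rec}.

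Applying the lemma to each parallel component and composing with \rulename{Unf-trans} and \rulename{Unf-struct} yields $\M' \Rrightarrow \M_{\texttt{inact}} \equiv \prod \pp_i \triangleleft \inact$.

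\textbf{Case $\G = \GvtPair{p}{q}{\ell_i(S_i).\G_i}_{i\in I}$.} Here $\G \lts{\lblsync{p}{q}{\ell_k}} \G_k$ by \rulegredsendrec, with $I$ nonempty. \cref{theo-soundness} then gives $\ell'$ and $\Gamma'$ with $\Gamma \lts{\lblsync{p}{q}{\ell'}} \Gamma'$. \cref{theo-sess-fid} turns this context reduction into a session reduction $\M \lts{\lblsync{p}{q}{\ell''}} \M'$, so the second disjunct holds.

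The expected main obstacle is the auxiliary lemma in the $\tend$ case. The difficulty lies not in the logic but in handling the de Bruijn substitution in \rulename{Unf-rec} and the non-syntax-directed \rulename{t-sub} and \rulename{t-rec} rules. I would first prove an inversion lemma for $\vdashp \PT \colon \tend$ by induction on the typing derivation, so that \rulename{t-sub} steps are absorbed. I would then do induction on the guardedness measure of $\PT$, using the substitution lemma from \cite{srpaper} to retype $\PT[\mu \Xv.\PT/\Xv]$ at $\tend$.

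A minor additional point: $\G$ may be an infinite tree, but since its root has one of two shapes, no coinduction is needed beyond what \cref{theo-soundness} already provides.
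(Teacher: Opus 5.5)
Your proof is correct and follows the route the paper indicates; the paper gives no written proof and presents the theorem as a consequence of association together with \cref{lem-typ-unfold} and the reduction correspondence. You split on the associated $\G$: for $\G=\tend$ every $\Gamma(\pp_i)$ is $\tend$, so \cref{lem-typ-unfold} plus the fact that a guarded $\tend$-typed process unfolds to $\inact$ gives the first disjunct; otherwise the root step of $\G$ yields a step of $\Gamma$ by \cref{theo-soundness}, which \cref{theo-sess-fid} transfers to $\M$. The only delicate ingredient is the one you isolate: it needs guardedness in the sense that recursion variables occur under prefixes, since $\mu\Xv.\,\text{if true then } \Xv \text{ else } \inact$ is $\tend$-typable but never unfolds to $\inact$, and this is exactly the paper's standing assumption.
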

\begin{tcolorbox}[colback=blue!10]
\cref{theo-progress} says that the only way a typed session has no reductions available is if 
it has terminated.
\end{tcolorbox}
The final, and the most intricate, session property we prove is liveness.
\begin{definition}[Session Liveness]\label[definition]{def-live-sess}
  Let $(\rdc^* \Rrightarrow)$ denote the composition of the multistep reduction
  and unfolding relations i.e. $\N \rdc^* \Rrightarrow \N'$ iff
  $\N \rdc^* \N'' \Rrightarrow \N'$ for some $\N''$.  
  Then session $\M$ is live iff
  \begin{enumerate}
    \item ${\M} \rdc^* \Rrightarrow {\M'} =  \pq \triangleleft \tout{\pp}{\ell}{\kf{e}}.\QT \;|\; \N$ implies 
    ${\M'}\rdc^* \pq \triangleleft \QT \;|\; \N'$ for some $\N'$
    
    \item ${\M}\rdc^* \Rrightarrow {\M'} = \pq \triangleleft \sum_{i \in I} \tin{\pp}{\ell_i}{x_i}.\QT_i \;|\; \N$ implies 
    ${\M'}\rdc^* \pq \triangleleft \QT_i[v/x_i] \;|\; \N'$ for some\vspace{-2pt} 
    $\N', i, v.$ 
  \end{enumerate}
  In Rocq this is expressed with the predicate \lstin{live_sess} \rocqurl{STLive/lemma/live_proc}{L262-L270}.
\end{definition}
\begin{tcolorbox}[colback=blue!10]
Session liveness says that 
when $\M$ is live, if $\M$ reduces to a session $\M'$ containing a participant that's attempting to send 
or receive, then $\M'$ reduces to a session where that communication has happened. 
It's also called \textit{lock-freedom} in \cite{fairnesslock,padovani}.
\end{tcolorbox}
\begin{remark}
In the premises in \cref{def-live-sess}, we have used the composition 
of multistep reduction and unfolding relations.
In contrast, previous work, e.g.~\cite[Definition 2.1.3]{projsurvey}, 
defines the premise of Item 1 in \cref{def-live-sess} as 
$\M \rdc^* \M' \Rrightarrow \pq \triangleleft \tout{\pp}{\ell}{\kf{e}}.\QT \;|\; \N$. 
However, the latter definition accepts stuck sessions that 
coincidentally look like their state after taking a step.
For example, let $\M = \pp \triangleleft \mu \kf{X}.\pq ! \ell(0).\kf{X} \SEP  \pq \triangleleft \bf{0}$.
This session cannot progress and thus should not be considered live. 
By the previous definition, we have $\M \rdc^* \M \Rrightarrow \pp \triangleleft \pq ! \ell(0).\mu \kf{X}.\pq ! \ell(0).\kf{X} \SEP  \pq \triangleleft \bf{0}$.
Now the session state after the communication of $\pp$ happens is 
$\pp \triangleleft \mu \kf{X}.\pq ! \ell(0).\kf{X} \SEP  \pq \triangleleft \bf{0}$ which just happens to equal $\M$.
We further have that $\M \rdc^* \M$, hence $\M$ satisfies the liveness criteria. 
Our definition avoids this problem as we require a transition from the unfolded session 
$\pp \triangleleft \pq ! \ell(0).\mu \kf{X}.\pq ! \ell(0).\kf{X} \SEP  \pq \triangleleft \bf{0}$.
\end{remark}
\begin{theorem}[Liveness by Typing \rocqurl{STLive/lemma/fairness_feasible}{L870}]\label{theo-sess-live}
\revised{If $\Gamma \vdashm \M$ then $\M$ is live.}
\end{theorem}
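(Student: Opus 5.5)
We prove the statement by transferring liveness of the typing {\ctext} (\cref{theo-ctx-live}) to the session. The key device is a \emph{fair session execution}. Assume $\Gamma \vdashm \M$ and $\M \rdc^* \Rrightarrow \M'$, where $\M' = \pq \triangleleft \tout{\pp}{\ell}{\kf{e}}.\QT \mid \N$; the receive case is symmetric. Repeated use of \cref{theo-sub-red} along $\M \rdc^* \M''$, followed by \cref{lem-typ-unfold} for the final unfolding, gives some $\Gamma_0$ with $\Gamma \rdc^* \Gamma_0$ and $\Gamma_0 \vdashm \M'$.

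The typing derivation for $\pq$'s process is \rulename{t-out} possibly followed by \rulename{t-sub}. By subtyping inversion, $\Gamma_0(\pq)$ is a send type to $\pp$, so $\Gamma_0 \lts{\lblsend{q}{p}{\ell'(S)}}$ for some $\ell'$, $S$. Since $\Gamma_0 \assoc \G_0$ for some $\G_0$ (from the \rulename{t-sess} premise), $\Gamma_0$ is live by \cref{theo-ctx-live}. It therefore suffices to exhibit a finite reduction sequence from $\M'$ that ends in a $(\pq,\pp)$ communication. The communication must also be the one consuming this particular prefix of $\pq$.

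To get one, I will construct a fair session path from $\M'$. I use a round-robin scheduler, implemented as a \lstin{CoFixpoint} over cosequences; as in the proof of \cref{theo-ctx-live}, this relies on \lstin{constructive_indefinite_description}. At each step it fires the enabled pair of participants that has waited longest, and the reduction it fires is supplied by \cref{theo-sess-fid} (or \cref{theo-progress} when nothing is pending). In parallel, \cref{theo-sub-red} yields a typing {\ctext} path $\Gamma_0 \lts{\lambda_0} \Gamma_1 \lts{\lambda_1} \cdots$ with matching labels and $\Gamma_k \vdashm \M_k$ for each $k$.

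The main obstacle is showing that this {\ctext} path is fair in the sense of \cref{def-fair-live}. Suppose $\Gamma_n \lts{\lblsync{r}{s}{\ell''}}$. Session fidelity gives the matching session reduction $\M_n \lts{(\pr,\ps)}$. That reduction stays enabled until $\pr$ or $\ps$ moves, because participants not in the subject have unchanged types and processes. Under the longest-waiting policy the pair $(\pr,\ps)$ is eventually scheduled unless one of them moves first. However, fairness asks specifically for a $(\pr,\ps)$ transition. The pair's enabledness could be destroyed by $\pr$ communicating with another partner, and then re-established, infinitely often. I would handle this by always prioritising any enabled pair among those waiting longest per participant. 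Failing that, I would use the balancedness argument behind \cref{theo-ctx-live} to argue that it suffices to prove the weaker property that is actually needed: every pending participant eventually moves.

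Given fairness, liveness of $\Gamma_0$ yields some $k$ with $\lambda_k = \lblsync{q}{p}{\ell'}$, and we take the least such $k$. Before step $k$, $\pq$ has not moved, since its only enabled action is the send to $\pp$. Hence the $\pq$-component of $\M_j$ for $j\le k$ still unfolds to $\tout{\pp}{\ell}{\kf{e}}.\QT$, and the session step at $k$, obtained by inverting \rulename{R-unfold} and \rulename{R-comm}, yields $\pq \triangleleft \QT \mid \N'$. This gives $\M' \rdc^* \pq \triangleleft \QT \mid \N'$ as required.
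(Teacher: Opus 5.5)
Your architecture is the paper's. You take a fair session path from the reached session, transport it via \cref{theo-sub-red} to an environment path with matching labels, and apply \cref{theo-ctx-live} to obtain a $(\pq,\pp)$ step. You then observe that $\pq$ cannot move before that step, so the step consumes its pending prefix. There are two minor differences, both fine. First, you build the fair path with an explicit scheduler on environment enabledness, whereas the paper proves separately that a path fair for \emph{session} transitions exists and derives fairness of the environment path from \cref{theo-sess-fid}. Second, you read the final step off the matching labels instead of re-applying session fidelity at $\N_0$.

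The gap is the step you yourself call the main obstacle: fairness of the environment path is never established. You worry that $\pr$ might abandon an enabled pair $(\pr,\ps)$ by communicating with someone else, and you answer this only with tentative fixes. The second fix (show merely that every pending participant eventually moves, reusing the balancedness argument) does not plug into \cref{theo-ctx-live}. That theorem's hypothesis is fairness in the sense of \cref{def-fair-live}, so this route amounts to re-proving environment liveness under a different fairness assumption.

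The missing observation is that the scenario you fear cannot occur, for the same reason you already use for $\pq$ in your last paragraph. Suppose $\Gamma_n \lts{\lblsync{r}{s}{\ell''}}$. Then $\Gamma_n(\pr)$ is a send type whose only partner is $\ps$, and $\Gamma_n(\ps)$ is a receive type whose only partner is $\pr$. So, until an $(\pr,\ps)$ step fires, any transition of the environment path involving $\pr$ or $\ps$ would itself be an $(\pr,\ps)$ synchronisation. By the matching labels, the same holds for the session path. All intervening transitions therefore leave $\Gamma(\pr)$ and $\Gamma(\ps)$ unchanged, and the pair stays enabled.

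Consequently, enabled pairs are pairwise disjoint and persist until fired. Also, $\dom{\Gamma_0}$ is finite and preserved by reduction. A plain first-come-first-served scheduler thus fires every enabled pair after a bounded number of steps, and the path cannot end while some pair is enabled. With this argument, plus classical choice to decide enabledness when defining the \lstin{CoFixpoint}, your path is fair in the sense of \cref{def-fair-live}, and the rest of your proof goes through.
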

\begin{proof}
\revised{We detail the proof for the send case of \cref{def-live-sess}; 
the case for the receive is similar.
Suppose that $\M \rdc^* \Rrightarrow \N$ and 
$\N \equiv \pp \triangleleft \tout{\pq}{\ell}{e}.\PT' \SEP \N'$.
Our goal is to show that there exists a $N''$ such that
$\N \equiv \pp \triangleleft \tout{\pq}{\ell}{e}.\PT' \SEP \N' \rdc^* \pp \triangleleft \PT' \SEP \N''$. 
By \cref{theo-sub-red}, we also have that  $\Gamma \vdashm \pp \triangleleft \tout{\pq}{\ell}{e}.\PT' \SEP \N'$  for some $\Gamma$.}

\revised{
Now let $\rho$ be a session reduction path starting from $\pp \triangleleft \tout{\pq}{\ell}{e}.\PT' \SEP \N'$, 
which has the following fairness property \rocqurl{STLive/lemma/live_proc}{L242-L244}: 
whenever a transition with label $\lblsync{p}{q}{\ell}$ is enabled,
a transition with label $\lblsync{p}{q}{\ell'}$ eventually occurs for some $\ell'$.
It is shown that such a path always exists, and that path can be constructed using the axioms
\lstin{constructive_indefinite_description} and \lstin{excluded_middle_informative} 
\rocqurl{STLive/lemma/fairness_feasible}{L856}. We now show that the existence of this path implies
the liveness of the session \rocqurl{STLive/lemma/live_proc}{L1699}.
}

\revised{
By extending \cref{theo-sub-red} onto paths \rocqurl{STLive/lemma/live_proc}{L41-L44}, 
let $\rho'$ be a
{\ltc} reduction path starting with $\Gamma$ 
such that every session in $\rho$ is typed by the {\ctext} at the corresponding index 
of $\rho'$, and the transitions of $\rho$ and $\rho'$ at every step match  \rocqurl{STLive/lemma/live_proc}{L246-L248}. 
Now we can show that $\rho'$ is fair  \rocqurl{STLive/lemma/live_proc}{L220-L221}.
Therefore by \cref{theo-ctx-live}, path $\rho'$ is live, 
so transition $(\pp,\pq)\ell'$ 
eventually occurs in $\rho'$ for some $\ell'$.
Therefore  
$\rho'=\Gamma \rdc^* \Gamma_0 \lts{(\pp,\pq)\ell'} \Gamma_1\rdc ..$
for some  $\Gamma_0, \Gamma_1$.}

\revised{
Now consider the session $\N_0$ typed by 
$\Gamma_0$ in $\rho$. 
We have $\pp \triangleleft \tout{\pq}{\ell}{e}.\PT' \SEP \N' \rdc^* \N_0$
by $\N_0$ being on $\rho$. 
We have that
$\N_0 \lts{(\pp,\pq)\ell''} \N_1$  for some $\ell'', \N_1$ by \cref{theo-sess-fid}.
Observe that  $\N_0 \equiv \pp\triangleleft \tout{\pq}{\ell}{e}.\PT' \SEP \N''$ for some 
$\N''$ as no transitions involving $\pp$ have happened on the reduction path to $\N_0$.
Therefore $\ell=\ell''$, so $\N_1 \equiv \pp \triangleleft \PT' \SEP \N''$ for some $\N''$, as needed.
}
\end{proof}
\begin{tcolorbox}[colback=blue!10]
\cref{theo-sess-live} shows that typable sessions are live. 
\end{tcolorbox}

\section{Related Work}
\vspace{-3pt}Examinations of liveness, also called \textit{lock-freedom}, guarantees of multiparty session types abound in the literature, e.g. 
\cite{padovani_typing_2014,LessIsMore,LessIsMoreRevisited,barbanera_partially_2023}.
Most of these papers use the definition of liveness proposed by Padovani \cite{padovani}, 
which does not make the fairness assumptions that characterise the property \cite{francez_fairness_1986} explicit. 
Contrastingly, van Glabbeek et al.\ \cite{fairnesslock} examine several notions of fairness and the liveness properties
induced by them, and devise a type system with flexible choices \cite{castellani_reversible_2019} that captures
the strongest of these properties, the one induced by the \textit{justness} \cite{fairness} assumption.
In their terminology, \cref{def-live-sess}
roughly corresponds to liveness under \revised{full fairness}, 
which is the weakest of the properties considered in that paper. They also show that their 
type system is complete, i.e. every live process can be typed. We haven't presented any completeness results 
in this paper. 
Fairness assumptions are also made explicit in recent work by Ciccone et al.\ \cite{ciccone_fair_2024,ciccone_2022-binary}, 
which use generalised inference systems with coaxioms \cite{ancona_generalizing_2017} to characterise 
\textit{fair termination}, which is a stronger property than \cref{def-live-sess}, but enjoys good compositionality properties.

Mechanisation of session types in proof assistants is a relatively new effort. 
Our formalisation is built on recent work by Ekici et al.\ \cite{srpaper}, which uses a coinductive representation
of global and local types to prove subject reduction and deadlock-freedom. 
Their work uses a typing relation between global types and sessions while ours uses one 
between associated {\ltc}s and sessions. This necessitates the rewriting of
subject reduction and deadlock-freedom proofs in addition to the novel operational correspondence, safety and liveness
properties we have proved.

\revised{In other work, Ekici and Yoshida \cite{ekici_completeness_2024}
present a mechanisation of the completeness of asynchronous subtyping.
Tirore et al. \cite{tirore2023sound,tirore_thesis, tirore2025multiparty}
give the Rocq mechanisation of a projection function for coinductive plain merge projection,
and a proof of subject reduction for an asynchronous $\pi$-calculus.}

\revised{Implementations of session types that are more geared towards 
practical verification include Castro-Perez et al.'s Zooid \cite{zooid}, which is a DSL that supports
the extraction of verified protocols via asynchronous MPST.
Similar to our work, Zooid mechanises LTS semantics for {\ltc}s 
and global types, and proves an operational correspondence between a global type and its projections.
They then exhibit that the trace of a well-typed session can be embedded in the trace of 
the global type that types it.
Unlike our work, Zooid does not formalise the properties
of the type {\ctext}s, and does not prove 
type safety and other properties of processes with respect
to a typing system. 
In contrast, our implementation directly certifies that typable
sessions are communication safe, deadlock-free and live.} 
   
\revised{The Actris framework \cite{hinrichsen2019actris} enriches
the separation logic of Iris \cite{jung2018iris} with binary session types to allow reasoning about message passing programs.
LinearActris \cite{jacobs2024deadlock} further employs linearity to certify deadlock-free programs.
Jacobs et al.'s Multiparty GV \cite{jacobs2022multiparty}, based on the functional language of Wadler's GV \cite{wadler2012gv},
proves deadlock-freedom of interleaved sessions using multiparty session 
types and by reasoning on the communication topology 
of different sessions. None of these works address liveness.
In general, verification of liveness properties, with or without session types, in concurrent separation logic is 
an active research area
that has produced tools such as TaDa \cite{gardner2021tada}, FOS \cite{lee2023fos} and LiLo \cite{lee2025lilo}
in the past few years.}

Castro-Perez et al.\ \cite{castro2026synthetic} devise a multiparty session type system 
that dispenses with projections and local types by defining the typing relation directly on the LTS
specifying the global protocol, and formalise the results in Agda. 
Li and Weis \cite{li_2025_implement} present a Rocq formalisation of 
a characterisation of \textit{implementability} for asynchronous global protocols 
given by a top-down specification. Implementable global protocols are those
corresponding to a set of deadlock-free locally specified processes, which is guaranteed in our work 
by the existence of an associated global type.
Ciccone's PhD thesis \cite{ciccone2023concertogrossosessionsfair}
presents an Agda formalisation of fair termination for binary session types. 
Binary session types were also implemented in Agda by Thiemann \cite{thiemann2019} and in Idris by Brady \cite{brady_type-driven_2017}. Several implementations of
binary session types are also present for Haskell
\cite{dardha2021,lindley2016embedding,pucella2008haskell}.
\revised{None of the
above works formalises liveness.}

\vspace{-2pt}\section{Conclusion and Future Work}
In this work we have mechanised the semantics of local and global types, proved a correspondence between them,
and used this correspondence to prove safety, deadlock-freedom and liveness for the typed sessions in 
a simple message-passing calculus. To our knowledge, our liveness result is the first mechanised
one of its kind; it is the most challenging of the theorems we formalised. 
Our implementation 
illustrates some of the difficulties encountered when mechanising liveness properties in general.
These include the use of mixed inductive-coinductive reasoning and the absence of a clear general proof technique.

\revised{A proof method for liveness in multiparty session
  processes using association was proposed in \cite{YHK2026},
  though it was proven by pencil-and-paper based on the inductive  
  full-merging projection. This association technique offers a general
proof method 
for establishing type soundness of the top-down approach, extending
to crash-stop failures \cite{BHYZ2025}
and asynchronous subtyping \cite{PY2025}. 
The present paper provides a rigorous Rocq formalisation of the type
system based on the association relation for the first time.}
The mechanisation of the association is not trivial: 
for instance, the induction on the g-context height used in the
proof of \cref{theo-ctx-live} 
requires a careful setup. The proof proceeds smoothly after this setup,
as the $\pp$-grafting of a global type neatly encodes information about the enabled 
transition of $\pp$. 
Our work further demonstrates the power of parameterised
coinduction in the verification of liveness properties, and provides a framework
for the verification of further linear time properties on session types. 

\revised{\textbf{Future Work}. There are a couple avenues for extension for our work. 
One apparent addition would be to extend the library to support coinductive full-merge projection \cite{projsurvey}.
None of the proofs included in this work explicitly use the properties of the merge operation;
therefore we expect that if the results of \cite{srpaper} can be generalised to full-merge, so can our work. 
Another possible variation on our work would be to adapt it to formalise liveness of different process calculi such as the
$\pi$-calculus, as done on paper in \cite{LessIsMoreRevisited}. The design of our proofs provides clear layers of separation between
the association of {\ltc}s to global types, which makes the {\ltc}s well-behaved, 
and the typing relation of sessions by {\ltc}s, which makes the sessions well-behaved. Therefore, we conjecture 
that any process calculus that satisfies \cref{lem-typ-unfold}, \cref{theo-sub-red} and \cref{theo-sess-fid}
could have its liveness analysed within our framework without significantly changing the remaining proofs. 
However, our requirement that the transitions match one-to-one in subject reduction and session fidelity (\cref{remark-reactive-justif})
is probably too stringent for most calculi, hence the proofs would
have to consider weak simulation relations.}


\revised{
Another possible extension would be to vary our fairness assumptions and 
examine the liveness properties induced by them, as done in \cite{fairnesslock}. 
In this paper the liveness property we targeted for our sessions is called 
\textit{Padovani lock-freedom} or \textit{lock-freedom under full fairness} in \cite{fairnesslock}.
However, we could have targeted stronger liveness properties parameterised by 
other fairness assumptions such as \textit{fairness of instructions} or
\textit{justness}. As in \cite{fairnesslock}, targeting a different liveness property 
would likely require us 
to use a different formulation of global types and a different
projection relation, but the code we used to reason about the linear-time properties of the 
LTS of sessions could be re-used.  
}

\textbf{Declaration.} We confirm that no AI generated text or code is present in this work.
\newpage

\bibliography{references}

\end{document}